\colorlet{LegColor}{black}
\g@addto@macro{\UrlBreaks}{\UrlOrds}
     \renewcommand*{\exp}{\operatorname{\mathbf{E}}}          
     \DeclareMathOperator{\EE}{\mathbb{E}}
  \newcommand{\NN}{\mathbb{N}}                                 \newcommand{\RR}{\mathbb{R}}
\newcommand{\st}{\mid}
\newcommand{\set}[1]{\left\{ #1 \right\}}
\newcommand{\setof}[2]{\set{#1 \st #2}}
\newcommand{\tuple}[1]{\left( #1 \right)}
\newcommand{\abs}[1]{\left\lvert #1 \right\rvert}
    \DeclareMathOperator*{\cod}{cod}        
\DeclareMathOperator*{\comp}{\circ}
\newcommand{\ccat}[1]{\ensuremath{\boldsymbol{\mathbf{#1}}}}
\newcommand{\cat}[1]{\ensuremath{\mathcal{#1}}}
       \newcommand{\Grph}{\ccat{Grph}}
\newcommand{\Sub}{\ccat{Sub}}
\DeclareMathOperator{\id}{id}     
\newcommand{\ifempty}[3]{\ifx\\#1\\ #2\else #3\fi}
\newcommand{\toby}[1]{\to[#1]}
\renewcommand{\to}[1][]{\ifempty{#1}{\rightarrow}{\xrightarrow{\, #1 \,}}}
\renewcommand{\gets}[1][]{\ifempty{#1}{\leftarrow}{\xleftarrow{\, #1 \,}}}
\newcommand{\ntrans}[1]{\stackrel{\bullet}{\rightarrow}}
\newcommand{\pair}[2]{\langle #1, #2 \rangle}
\newcommand{\avg}[1]{\langle #1 \rangle}
\def\al{\alpha}
\def\ba{\beta}
   \renewcommand*{\exp}{\EE}
\newcommand{\ddt}{\tfrac{d}{dt}}
\renewcommand{\avg}[1]{\exp_p #1}
\newcommand{\dotavg}[1]{\frac{d}{dt}\avg{#1}}
\newcommand{\obs}[1]{\left[#1\right]}
\newcommand{\tsum}{\textstyle\sum}
\newcommand{\GrphP}{\ccat{\Grph_*}}
\newcommand{\Gs}{\cat{G}}             \newcommand{\GsP}{\cat{G_*}}                 
\newcommand{\mchs}[2]{\left[{#1}, {#2}\right]}
\newcommand{\nmchs}[2]{\abs{\mchs{#1}{#2}}}
\newcommand{\rwto}[1][]{\ifempty{#1}{\Rightarrow}{\Rightarrow_{#1}}}
\newcommand{\rto}[1][]{\ifempty{#1}{\rightharpoonup}{\xrightharpoonup{\, #1 \;}}}
\newcommand{\dg}[1]{#1^\dag}
\newcommand{\MGs}[3][]{#2 *_{#1} #3}
\newcommand{\MGCls}[2]{\MGs[\simeq]{#1}{#2}}
\newcommand{\LMGs}[2]{\MGs[L]{#1}{#2}} \newcommand{\RMGs}[2]{\MGs[R]{#1}{#2}} 
\newcommand{\rset}{\mathcal{R}}
         \newcommand{\ie}{i.e.\@\xspace}
     \newcommand{\eg}{e.g.\@\xspace}
\newcommand{\cf}{cf.\@\xspace}
\newcommand{\pref}[2]{\hyperref[#2]{#1\ref*{#2}}}
\newcommand{\Sec}[2][\S]{\pref{#1}{sec:#2}}
\newcommand{\App}[2][App.]{\pref{#1~}{app:#2}}
\newcommand{\Fig}[2][Fig.]{\pref{#1~}{fig:#2}}
\newcommand{\Def}[2][Def.]{\pref{#1~}{def:#2}}
\newcommand{\Lem}[2][Lemma]{\pref{#1~}{lem:#2}}
\newcommand{\eqnlabel}[1]{\refstepcounter{equation}\textup{(\theequation)}\label{#1}}
\newenvironment{diags}[1][0pt]{\begin{center}\vspace{#1}\def\diagsspaceafter{#1}}{\vspace{\diagsspaceafter}\end{center}}
\newenvironment{narrow}[2][0pt]{\list{}{\setlength{\listparindent}{\parindent}\setlength{\labelwidth}{0pt}\setlength{\topsep}{0pt}\setlength{\partopsep}{0pt}\setlength{\parsep}{\parskip}\setlength{\leftmargin}{#1}\setlength{\rightmargin}{#2}}\item\relax\ignorespaces}{\endlist}
\newcommand{\rightoverlay}[3][\rightmargin]{\newbox\roverlaybox\setbox\roverlaybox\hbox{#3}\begin{tikzpicture}\path[use as bounding box] (0.5\wd\roverlaybox - #1, -0.5*#2);\pgftext{\usebox{\roverlaybox}}\end{tikzpicture}}
\tikzstyle{mono}=[commutative diagrams/tail]
\tikzstyle{epi}=[commutative diagrams/two heads]
\tikzstyle{eq}=[commutative diagrams/equal]
\tikzstyle{nat}=[commutative diagrams/Rightarrow]
\tikzstyle{mtch}=[mono]
\tikzstyle{rule}=[commutative diagrams/rightharpoonup]
\tikzstyle{elur}=[commutative diagrams/rightharpoondown]
\tikzstyle{RULE}=[elur] \tikzstyle{inj}=[mono]
\tikzstyle{birule}=[cm right to-cm left to]
\tikzstyle{grphdiag-bg}=[rounded corners, fill=white!85!black]
\tikzstyle{grphdiag}=[>=stealth, framed, thick,background rectangle/.style=grphdiag-bg]
\tikzstyle{grphnode}=[rectangle,grphdiag-bg]
\tikzstyle{ingrphdiag}=[>=stealth, semithick,n/.append style={semithick, minimum size=4pt}]
\newcommand{\arrsn}[3][.1]{\arr[#1]{#2.south}{#3.north}}
\newcommand{\arr}[3][.1]{\draw[->] ($(#2)!  #1!(#3)$) --
            ($(#2)!1-#1!(#3)$);}
\tikzstyle{n}=[circle, draw, thick, minimum size=6pt,inner sep=0pt] \tikzstyle{b}=[n, fill=DarkGray] \tikzstyle{c}=[n, fill=white] \tikzstyle{p}=[n, draw=none] \tikzstyle{g}=[n, draw=DarkGreen, fill=white!80!DarkGreen] \tikzstyle{r}=[n, draw=BrickRed, fill=white!80!BrickRed] 
\newcommand{\n}[3][n]{\node[#1] (#2) at (#3) {}}
\newcommand{\conc}[1]{\, \mathbin{\tikz[grphdiag, x=.4cm, y=.4cm, baseline=3.5, semithick,n/.append style={minimum size=4pt}]{#1}} \,}
\newcommand{\bigconc}[1]{\: \mathbin{\tikz[grphdiag, x=.7cm, y=.6cm, baseline=6.5]{#1}} \:}
\newcommand{\birulearrow}[2]{\hspace{1ex}\tikz{\draw[rule] (1,   .1) -- node[above] {#1} ++( 1.2, 0);
  \draw[rule] (2.2, .0) -- node[below] {#2} ++(-1.2, 0);
}\hspace{1ex}}
\newcommand{\src}[1][]{\ensuremath{s_{#1}}}
\newcommand{\tgt}[1][]{\ensuremath{t_{#1}}}
\tikzset{commutative diagrams/row sep/normal=1.5em}
\tikzset{commutative diagrams/column sep/normal=1.6em}
\newcommand{\Example}{\Sec{intro}}
\begin{document}

\title{Rate Equations for Graphs\thanks{This is a preprint of a paper to be presented at the 18th
    International Conference on Computational Methods in Systems
    Biology (CMSB~2020).
The conference version will be published in Springer's LNCS/LNBI
    series.
RH-Z was supported by ANID FONDECYT/POSTDOCTORADO/No3200543.
  }}
\titlerunning{Rate Equations for Graphs}
\author{Vincent~Danos\inst{1} \and
  Tobias~Heindel\inst{2} \and
  Ricardo~Honorato-Zimmer\inst{3} \and
  Sandro~Stucki\inst{4} }
\authorrunning{V.~Danos et al.}
\institute{CNRS, ENS-PSL, INRIA,
  Paris, France, \email{vincent.danos@ens.fr} \and
  Institute of Commercial Information Technology and Quantitative Methods, \\
Technische Universit\"at Berlin, Germany, \email{heindel@tu-berlin.de} \and
Centro Interdisciplinario de Neurociencia de Valparaíso, \\
  Universidad de Valparaíso, Chile, \email{ricardo.honorato@cinv.cl} \and
  Dept.\ of Computer Science and Engineering,
  University of Gothenburg, Sweden,
  \email{sandro.stucki@gu.se}
}
\maketitle

\begin{abstract}
  In this paper, we combine ideas from two different scientific
  traditions:
1) graph transformation systems (GTSs) stemming from the theory of
  formal languages and concurrency, and
2) mean field approximations (MFAs), a collection of
  approximation techniques ubiquitous in the study of complex
  dynamics.
Using existing tools from algebraic graph rewriting, as well as new
  ones, we build a framework which generates rate equations for
  stochastic GTSs and from which one can derive MFAs of any order (no
  longer limited to the humanly computable).
The procedure for deriving rate equations and their approximations
  can be automated.
An implementation and example models are available online at
  \url{https://rhz.github.io/fragger}.
We apply our techniques and tools to derive an expression for the
  mean velocity of a two-legged walker protein on DNA.

  \keywords{Mean Field Approximations \and
    Graph Transformation Systems \and
    Algebraic Graph Rewriting \and
    Rule-based Modelling} \end{abstract}

\section{Introduction}\label{sec:intro}

Mean field approximations (MFAs) are used in the study
of complex systems to obtain simplified and revealing
descriptions of their dynamics.
MFAs are used in many disparate contexts such as
Chemical Reaction Networks (CRNs) and their derivatives~\cite{vankampen,kurtz,bortolussi2013continuous},
walkers on bio-polymers~\cite{tasep,Stukalin2005},
models of epidemic spreading~\cite{gleeson},
and the evolution of social networks~\cite{durrett2012graph}.
These examples witness both the power and universality of MFA
techniques, and the fact that they are pursued in a seemingly
ad~hoc, case-by-case fashion.

The case of CRNs is particularly interesting
because they provide a human-readable,
declarative language for a common class of complex systems.
The stochastic semantics of a CRN is given by
a continuous-time Markov chain (CTMC)
which gives rise to the so-called \emph{master equation} (ME).
The ME is a system of differential equations describing the time
evolution of the probability of finding the CRN in any given state.
Various tools have been developed to automate
the generation and solution of the ME from a given CRN,
liberating modellers from the daunting task of working with
the ME directly (\eg~\cite{FagesS08sfm,ThomasMG12pone,LopezMBS13msb}).

Its high dimensionality often precludes exact solutions of the ME.
This is where MFA techniques become effective.
The generally countably infinite ME is replaced by
a finite system of differential equations,
called the \emph{rate equations} (RE)~\cite{vankampen,ramonemre},
which describe the time evolution of
the average occurrence count of individual species.
Here, we extend this idea to the case of graphs and,
in fact, the resulting framework subsumes all the examples
mentioned above (including CRNs).
The main finding is summarised in a single equation \eqref{eq:GREG}
which we call the \emph{generalised rate equation for graphs} (GREG).
In previous work, we have published a solution to this problem
for the subclass of reversible graph rewriting
systems~\cite{icfem,rc15}.
The solution presented here is valid for \emph{any} such system, reversible or not.
The added mathematical difficulty is substantial and concentrates
in the backward modularity \Lem{bmod}.
As in Ref.~\cite{rc15}, the somewhat informal approach of Ref.~\cite{icfem} is replaced with precise category-theoretical language with which the backward modularity Lemma finds a concise and natural formulation.

As the reader will notice, Equation~\eqref{eq:GREG}
is entirely combinatorial and can be readily implemented.
Our implementation can be played with at~\url{https://rhz.github.io/fragger}.
Its source can be found at~\url{https://github.com/rhz/fragger}.

\subsection{Two-legged DNA walker}\label{subsec:bimotor}

\newcommand*{\Ga}{\n[b]{b}{0,1};
  \n[c]{c1}{0,0};
  \n[c]{c2}{1,0};
  \draw (b) edge[bend right,->] (c1);
  \draw (b) edge[bend left,->,LegColor] (c1);
  \draw[->] (c1) -- (c2);}
\newcommand*{\bigGa}{\n[b]{b}{0,1};
  \n[c]{c1}{0,0};
  \n[c]{c2}{1,0};
  \draw (b.south west) edge[bend right,->] (c1.north west);
  \draw (b.south east) edge[bend left,->,LegColor] (c1.north east);
  \draw[->] (c1) -- (c2);}
\newcommand*{\Gb}{\n[b]{b}{.5,1};
  \n[c]{c1}{0,0};
  \n[c]{c2}{1,0};
  \draw[->] (b) -- (c1);
  \draw[->,LegColor] (b) -- (c2);
  \draw[->] (c1) -- (c2);}
\newcommand*{\Gc}{\n[b]{b}{1,1};
  \n[c]{c1}{0,0};
  \n[c]{c2}{1,0};
  \draw (b) edge[bend right,->] (c2);
  \draw (b) edge[bend left,->,LegColor] (c2);
  \draw[->] (c1) -- (c2);}
\newcommand*{\bigGc}{\n[b]{b}{1,1};
  \n[c]{c1}{0,0};
  \n[c]{c2}{1,0};
  \draw (b.south west) edge[bend right,->] (c2.north west);
  \draw (b.south east) edge[bend left,->,LegColor] (c2.north east);
  \draw[->] (c1) -- (c2);
}
\newcommand*{\Go}{\n[b]{b}{0,1};
  \n[c]{c1}{0,0};
  \draw (b) edge[bend right,->] (c1);
  \draw (b) edge[bend left,->] (c1);}

Let us start
with an example from biophysics~\cite{Stukalin2005}.
The model describes a protein complex walking on DNA.
The walker contains two special proteins -- the \emph{legs} --
each binding a different DNA strand.
The legs are able to move along the strands independently
but can be at most $m$ DNA segments apart.

Following Stukalin et~al.~\cite{Stukalin2005},
we are interested in computing the velocity at which
a two-legged walker moves on DNA with $m=1$.
In this case, and assuming the two legs are symmetric,
there are only two configurations a walker can be in:
either extended (E) or compressed (C).
Therefore all possible transitions can be compactly represented by
the four rules shown in \Fig{bimotor-rules},
where the grey node represents the walker
and white nodes are DNA segments.
The polarisation of the DNA double helix is represented
by the direction of the edge
that binds two consecutive DNA segments.
Rules are labelled by two subscripts:
the first tells us if the leg that changes position is moving
\emph{forward}~(F) or \emph{backward}~(B), while the second
states whether the rule extends or compresses the current configuration.

\def\enp{\mathbb E}

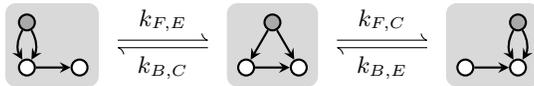
\begin{figure}[tb]
  \begin{center}
    \begin{tikzpicture}[grphdiag, x=.7cm, y=.6cm]
      \Ga \end{tikzpicture}
    \birulearrow{$k_{F,E}$}{$k_{B,C}$}
    \begin{tikzpicture}[grphdiag, x=.7cm, y=.6cm]
      \Gb
    \end{tikzpicture}
    \birulearrow{$k_{F,C}$}{$k_{B,E}$}
    \begin{tikzpicture}[grphdiag, x=.7cm, y=.6cm]
      \Gc \end{tikzpicture}
  \end{center}
  \vspace{-.4cm}
\caption{Stukalin model of a walking DNA bimotor.}\label{fig:bimotor-rules}
\end{figure}

The \emph{mean velocity} $V$ of a single walker in the
system can be computed from the rates at which they move
forward and backward and their expected number of occurrences
$\enp{\obs{G_i}}$, where $G_i$ is in either of the three
possible configurations depicted in Fig.~\ref{fig:bimotor-rules},
and $\obs{G_i}$ is short for $\obs{G_i}(X(t))$,
the integer-valued random variable that tracks
the number of occurrences of $G_i$
in the (random) state of the system $X(t)$ at time $t$.
We call any real- or integer-valued function on $X(t)$
an \emph{observable}.
\begin{equation*}
   V = \frac{1}{2} \left(
     k_{F,E} \enp{\obs{\conc{\Ga}}}
   + k_{F,C} \enp{\obs{\conc{\Gb}}}
   - k_{B,E} \enp{\obs{\conc{\Gc}}}
   - k_{B,C} \enp{\obs{\conc{\Gb}}} \right)
\end{equation*}
In the case there is only a single motor in the system,
the observables $\obs{G_i}$ are Bernoulli-distributed random variables,
and the expectations $\enp{\obs{G_i}}$ correspond to
the probabilities of finding the motor
in the configuration $G_i$ at any given time.
Thus by constructing the ODEs for these observables,
we can compute the mean velocity of a single motor in the system.
That is, we must compute the rate equations for these graphs.

Intuitively, to compute rate equations
we must find all ways in which the rules can
create or destroy an occurrence of an observable of interest.
When, and only when, a rule application and
an occurrence of the given observable overlap,
can this occurrence be created or destroyed.
A systematic inventory of all such overlaps
can be obtained by enumerating the so-called
\hyperref[def:mgs]{\emph{minimal gluings}} (MGs)
of the graph underlying the given observable and
the left- and right-hand sides of each rule in the system.
MGs show how two graphs can overlap (full definition in the next section).
Such an enumeration of MGs is shown in \Fig{bimotor-mgs},
where the two graphs used to compute the MGs are the extended walker motif
-- the middle graph in \Fig{bimotor-rules} -- and
the left-hand side of the forward-extension rule.
The MGs are related and partially ordered by graph morphisms between them.

In theory, since we are gluing with the left-hand side of a rule
each one of the MGs represents a configuration in which
the application of the rule might destroy
an occurrence of the observable.
However, if we suppose that walkers initially have two legs,
then 13 of the 21 MGs in~\Fig{bimotor-mgs}
are impossible to produce by the rules,
because no rule can create additional legs.
Therefore those configurations will never be reached
by the system and we can disregard them.
If we further suppose the DNA backbone to be simple
and non-branching, we eliminate three more gluings.
Finally, if there is only one motor,
the remaining four non-trivial gluings are eliminated.
In this way, invariants can considerably reduce
the number of gluings that have to be considered.
Removing terms corresponding to observables which,
under the assumptions above, are identically zero,
we get the following series of ODEs.
For readability, only a subset of the terms is shown,
and we write $G$ instead of the proper $\enp{\obs{G}}$ in ODEs.

\begin{alignat*}{3}
\frac{d}{dt}\conc{\Gb} & {}={} &
  & k_{F,E} \conc{\Ga} - k_{B,C} \conc{\Gb}
  - k_{F,C} \conc{\Gb} + k_{B,E} \conc{\Gc}
  \\
\frac{d}{dt}\conc{\Ga} & {}= -{} &
  & k_{F,E} \conc{\Ga} + k_{B,C} \conc{\Gb}
  + k_{F,C} \conc{\n[b]{b}{.5,1};
    \n[c]{c1}{0,0};
    \n[c]{c2}{1,0};
    \n[c]{c3}{2,0};
    \draw[->] (b) -- (c1);
    \draw[->,LegColor] (b) -- (c2);
    \draw[->] (c1) -- (c2);
    \draw[->] (c2) -- (c3);} - \ldots
\\
\frac{d}{dt}\conc{\n[b]{b}{.5,1};
    \n[c]{c1}{0,0};
    \n[c]{c2}{1,0};
    \n[c]{c3}{2,0};
    \draw[->] (b) -- (c1);
    \draw[->,LegColor] (b) -- (c2);
    \draw[->] (c1) -- (c2);
    \draw[->] (c2) -- (c3);} & ={} & & k_{F,E} \conc{\n[b]{b}{0,1};
    \n[c]{c1}{0,0};
    \n[c]{c2}{1,0};
    \n[c]{c3}{2,0};
    \draw (b) edge[bend right,->] (c1);
    \draw (b) edge[bend left,->,LegColor] (c1);
    \draw[->] (c1) -- (c2);
    \draw[->] (c2) -- (c3);} - k_{B,C} \conc{\n[b]{b}{.5,1};
    \n[c]{c1}{0,0};
    \n[c]{c2}{1,0};
    \n[c]{c3}{2,0};
    \draw[->] (b) -- (c1);
    \draw[->,LegColor] (b) -- (c2);
    \draw[->] (c1) -- (c2);
    \draw[->] (c2) -- (c3);} - k_{F,C} \conc{\n[b]{b}{.5,1};
    \n[c]{c1}{0,0};
    \n[c]{c2}{1,0};
    \n[c]{c3}{2,0};
    \draw[->] (b) -- (c1);
    \draw[->,LegColor] (b) -- (c2);
    \draw[->] (c1) -- (c2);
    \draw[->] (c2) -- (c3);} + \ldots \\
\frac{d}{dt}\conc{\n[b]{b}{0,1};
    \n[c]{c1}{0,0};
    \n[c]{c2}{1,0};
    \n[c]{c3}{2,0};
    \draw (b) edge[bend right,->] (c1);
    \draw (b) edge[bend left,->,LegColor] (c1);
    \draw[->] (c1) -- (c2);
    \draw[->] (c2) -- (c3);} & = -{} & & k_{F,E} \conc{\n[b]{b}{0,1};
    \n[c]{c1}{0,0};
    \n[c]{c2}{1,0};
    \n[c]{c3}{2,0};
    \draw (b) edge[bend right,->] (c1);
    \draw (b) edge[bend left,->,LegColor] (c1);
    \draw[->] (c1) -- (c2);
    \draw[->] (c2) -- (c3);} + k_{B,C} \conc{\n[b]{b}{.5,1};
    \n[c]{c1}{0,0};
    \n[c]{c2}{1,0};
    \n[c]{c3}{2,0};
    \draw[->] (b) -- (c1);
    \draw[->,LegColor] (b) -- (c2);
    \draw[->] (c1) -- (c2);
    \draw[->] (c2) -- (c3);} + k_{F,C} \conc{\n[b]{b}{.5,1};
    \n[c]{c1}{0,0};
    \n[c]{c2}{1,0};
    \n[c]{c3}{2,0};
    \n[c]{c4}{3,0};
    \draw[->] (b) -- (c1);
    \draw[->,LegColor] (b) -- (c2);
    \draw[->] (c1) -- (c2);
    \draw[->] (c2) -- (c3);
    \draw[->] (c3) -- (c4);} - \ldots \\
\frac{d}{dt}\conc{\n[b]{b}{.5,1};
    \n[c]{c1}{0,0};
    \n[c]{c2}{1,0};
    \n[c]{c3}{2,0};
    \n[c]{c4}{3,0};
    \draw[->] (b) -- (c1);
    \draw[->,LegColor] (b) -- (c2);
    \draw[->] (c1) -- (c2);
    \draw[->] (c2) -- (c3);
    \draw[->] (c3) -- (c4);} & ={} & & \ldots
\end{alignat*}

Notice how
only graphs with extra white nodes to the right are obtained
when computing the ODE for the left graph in \Fig{bimotor-rules}.
The opposite is true for the right graph in \Fig{bimotor-rules}.
This infinite expansion can be further simplified
if we assume the DNA chain to be infinite or circular.
In this case we can avoid boundary conditions and replace the left-
and right-hand observables below by the simpler middle observable:
\begin{equation*}
  \enp{\obs{\bigconc{\Ga}}} =
  \enp{\obs{\bigconc{\Go}}} =
  \enp{\obs{\bigconc{\Gc}}}
\end{equation*}
The infinite expansion above now
boils down to a simple finite ODE system.
\begin{alignat*}{3}
\frac{d}{dt}\conc{\Gb} & ={} &
  & k_{F,E} \conc{\Go} - k_{B,C} \conc{\Gb}
  - k_{F,C} \conc{\Gb} + k_{B,E} \conc{\Go} \\
\frac{d}{dt}\conc{\Go} & = -{} &
  & k_{F,E} \conc{\Go} + k_{B,C} \conc{\Gb}
  + k_{F,C} \conc{\Gb} - k_{B,E} \conc{\Go}
\end{alignat*}
From the above ODEs and assumptions,
we get the steady state equation.
\begin{align*}
(k_{F,E} + k_{B,E}) \enp{\obs{\conc{\Go}}} & {}=
(k_{F,C} + k_{B,C}) \enp{\obs{\conc{\Gb}}}
\end{align*}
Since we have only one motor,
\[ \enp{\obs{\conc{\Go}}} + \enp{\obs{\conc{\Gb}}} = 1 \]
Using this, we can derive
the steady state value for the mean velocity:
\begin{align*}
  V & {}= \frac{1}{2} \biggl(
(k_{F,E} - k_{B,E}) \enp{\obs{\conc{\Go}}} +
    (k_{F,C} - k_{B,C}) \enp{\obs{\conc{\Gb}}} \biggr) \\[0.7em]
  & {}= \dfrac{(k_{F,C} + k_{B,C}) (k_{F,E} - k_{B,E})
  + (k_{F,E} + k_{B,E}) (k_{F,C} - k_{B,C})}{2 (k_{F,E} + k_{B,E} + k_{F,C} + k_{B,C})}
\end{align*}
This exact equation is derived in Ref.~\cite{Stukalin2005}.
We obtain it as a particular case of the general notion of
rate equations for graph explained below.
It is worth noting that,
despite the simplicity of the equation,
it is not easily derivable by hand.
This and other examples are available to play with in our web app
at \url{https://rhz.github.io/fragger/}.
The example models include
\begin{itemize}
\item the \href{https://rhz.github.io/fragger/?m=bimotor}{DNA walker model} described above;
\item a \href{https://rhz.github.io/fragger/?re=1&m=bunnies}{population model} tracking parent-child and sibling relationships;
\item the \href{https://rhz.github.io/fragger/?m=voter}{voter model}
  from Ref.~\cite{icfem};
\item the \href{https://rhz.github.io/fragger/?m=pa}{preferential attachment model} from Ref.~\cite{rc15}.
\end{itemize}
The DNA walker model presented in this introduction is small and reversible.
It requires no approximation to obtain a finite expansion.
By contrast, the population model and
the preferential attachment model are irreversible;
the population and the voter model require
an approximation to obtain a finite expansion.

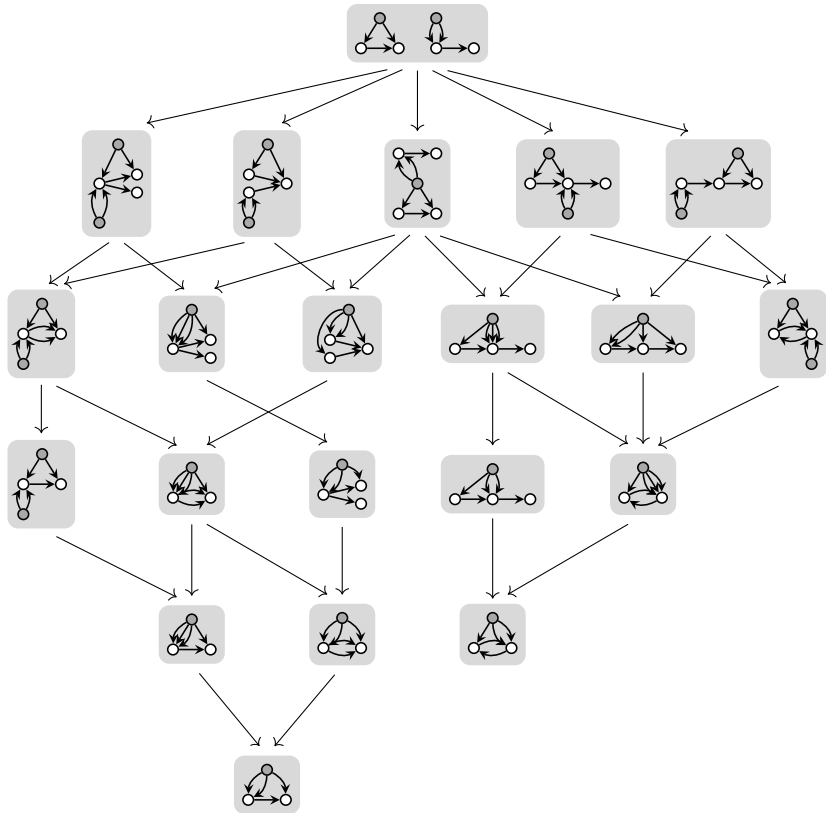
\begin{figure*}[tb]
  \vspace{-1cm}
  \begin{center}
    \begin{tikzpicture}[x=.5cm, y=.4cm]

      \node[grphnode] (mg-empty) at (0,0) {
        \tikz[ingrphdiag]{
          \n[b]{b1}{.5,1};
          \n[c]{c1}{0,0};
          \n[c]{c2}{1,0};
          \n[b]{b2}{2,1};
          \n[c]{c3}{2,0};
          \n[c]{c4}{3,0};
          \draw[->] (b1) -- (c1);
          \draw[->] (b1) -- (c2);
          \draw[->] (c1) -- (c2);
          \draw[->] (b2) edge[bend right] (c3);
          \draw[->] (b2) edge[bend left,LegColor] (c3);
          \draw[->] (c3) -- (c4);}};

\node[grphnode] (mg-b) at (0,-5) {
        \tikz[ingrphdiag]{
          \n[b]{b}{.5,1};
          \n[c]{c1}{0,0};
          \n[c]{c2}{1,0};
          \n[c]{c3}{0,2};
          \n[c]{c4}{1,2};
          \draw[->] (b) -- (c1);
          \draw[->] (b) -- (c2);
          \draw[->] (c1) -- (c2);
          \draw[->] (b) edge[bend right] (c3);
          \draw[->] (b) edge[bend left,LegColor] (c3);
          \draw[->] (c3) -- (c4);}};

      \arrsn{mg-empty}{mg-b};

\node[grphnode] (mg-c1) at (-8,-5) {
        \tikz[ingrphdiag]{
          \n[b]{b1}{.5,1.3};
          \n[c]{c1}{0,0};
          \n[c]{c2}{1,.3};
          \n[c]{c3}{1,-.3};
          \n[b]{b2}{0,-1.3};
          \draw[->] (b1) -- (c1);
          \draw[->] (b1) -- (c2);
          \draw[->] (c1) -- (c2);
          \draw[->] (b2) edge[bend right] (c1);
          \draw[->] (b2) edge[bend left,LegColor] (c1);
          \draw[->] (c1) -- (c3);}};

      \arrsn{mg-empty}{mg-c1};

\node[grphnode] (mg-c2) at (-4,-5) {
        \tikz[ingrphdiag]{
          \n[b]{b1}{.5,1.3};
          \n[c]{c1}{0,.3};
          \n[c]{c2}{1,0};
          \n[c]{c3}{0,-.3};
          \n[b]{b2}{0,-1.3};
          \draw[->] (b1) -- (c1);
          \draw[->] (b1) -- (c2);
          \draw[->] (c1) -- (c2);
          \draw[->] (b2) edge[bend right] (c3);
          \draw[->] (b2) edge[bend left,LegColor] (c3);
          \draw[->] (c3) -- (c2);}};

      \arrsn{mg-empty}{mg-c2};

\node[grphnode] (mg-c1*) at (8,-5) {
        \tikz[ingrphdiag]{
          \n[b]{b1}{1.5,1};
          \n[c]{c1}{0,0};
          \n[c]{c2}{1,0};
          \n[c]{c3}{2,0};
          \n[b]{b2}{0,-1};
          \draw[->] (b1) -- (c2);
          \draw[->] (b1) -- (c3);
          \draw[->] (c1) -- (c2);
          \draw[->] (b2) edge[bend right] (c1);
          \draw[->] (b2) edge[bend left,LegColor] (c1);
          \draw[->] (c2) -- (c3);}};

      \arrsn{mg-empty}{mg-c1*};

\node[grphnode] (mg-c2*) at (4,-5) {
        \tikz[ingrphdiag]{
          \n[b]{b1}{.5,1};
          \n[c]{c1}{0,0};
          \n[c]{c2}{1,0};
          \n[c]{c3}{2,0};
          \n[b]{b2}{1,-1};
          \draw[->] (b1) -- (c1);
          \draw[->] (b1) -- (c2);
          \draw[->] (c1) -- (c2);
          \draw[->] (b2) edge[bend right] (c2);
          \draw[->] (b2) edge[bend left,LegColor] (c2);
          \draw[->] (c2) -- (c3);}};

      \arrsn{mg-empty}{mg-c2*};

\node[grphnode] (mg-c1-b) at (-6,-10) {
        \tikz[ingrphdiag]{
          \n[b]{b}{.5,1.3};
          \n[c]{c1}{0,0};
          \n[c]{c2}{1,.3};
          \n[c]{c3}{1,-.3};
          \draw[->] (b) -- (c1);
          \draw[->] (b) -- (c2);
          \draw[->] (c1) -- (c2);
          \draw[->] (b) edge[bend right] (c1);
          \draw[->] (b) edge[bend left,LegColor] (c1);
          \draw[->] (c1) -- (c3);}};

      \arrsn{mg-c1}{mg-c1-b};
      \arrsn{mg-b}{mg-c1-b};

\node[grphnode] (mg-c2-b) at (-2,-10) {
        \tikz[ingrphdiag]{
          \n[b]{b}{.5,1.3};
          \n[c]{c1}{0,-.3};
          \n[c]{c2}{1,0};
          \n[c]{c3}{0,.3};
          \draw[->] (b) edge[bend right=70] (c1);
          \draw[->] (b) -- (c2);
          \draw[->] (c1) -- (c2);
          \draw[->] (b) edge[bend right] (c3);
          \draw[->] (b) edge[bend left,LegColor] (c3);
          \draw[->] (c3) -- (c2);}};

      \arrsn{mg-c2}{mg-c2-b};
      \arrsn{mg-b}{mg-c2-b};

\node[grphnode] (mg-c1*-b) at (6,-10) {
        \tikz[ingrphdiag]{
          \n[b]{b}{1,1};
          \n[c]{c1}{0,0};
          \n[c]{c2}{1,0};
          \n[c]{c3}{2,0};
          \draw[->] (b) -- (c2);
          \draw[->] (b) -- (c3);
          \draw[->] (c1) -- (c2);
          \draw[->] (b) edge[bend right=15] (c1);
          \draw[->] (b) edge[bend left=15,LegColor] (c1);
          \draw[->] (c2) -- (c3);}};

      \arrsn{mg-c1*}{mg-c1*-b};
      \arrsn{mg-b}{mg-c1*-b};

\node[grphnode] (mg-c2*-b) at (2,-10) {
        \tikz[ingrphdiag]{
          \n[b]{b}{1,1};
          \n[c]{c1}{0,0};
          \n[c]{c2}{1,0};
          \n[c]{c3}{2,0};
          \draw[->] (b) -- (c1);
          \draw[->] (b) -- (c2);
          \draw[->] (c1) -- (c2);
          \draw[->] (b) edge[bend right] (c2);
          \draw[->] (b) edge[bend left,LegColor] (c2);
          \draw[->] (c2) -- (c3);}};

      \arrsn{mg-c2*}{mg-c2*-b};
      \arrsn{mg-b}{mg-c2*-b};

\node[grphnode] (mg-c1-c2) at (-10,-10) {
        \tikz[ingrphdiag]{
          \n[b]{b1}{.5,1};
          \n[c]{c1}{0,0};
          \n[c]{c2}{1,0};
          \n[b]{b2}{0,-1};
          \draw[->] (b1) -- (c1);
          \draw[->] (b1) -- (c2);
          \draw[->] (b2) edge[bend right] (c1);
          \draw[->] (b2) edge[bend left,LegColor] (c1);
          \draw[->] (c1) edge[bend right] (c2);
          \draw[->] (c1) edge[bend left] (c2);}};

      \arrsn{mg-c1}{mg-c1-c2};
      \arrsn{mg-c2}{mg-c1-c2};

\node[grphnode] (mg-c1*-c2*) at (10,-10) {
        \tikz[ingrphdiag]{
          \n[b]{b1}{.5,1};
          \n[c]{c1}{0,0};
          \n[c]{c2}{1,0};
          \n[b]{b2}{1,-1};
          \draw[->] (b1) -- (c1);
          \draw[->] (b1) -- (c2);
          \draw[->] (b2) edge[bend right] (c2);
          \draw[->] (b2) edge[bend left,LegColor] (c2);
          \draw[->] (c1) edge[bend left] (c2);
          \draw[->] (c2) edge[bend left] (c1);}};

      \arrsn{mg-c1*}{mg-c1*-c2*};
      \arrsn{mg-c2*}{mg-c1*-c2*};

\node[grphnode] (mg-c1-c2-c1c2) at (-10,-15) {
        \tikz[ingrphdiag]{
          \n[b]{b1}{.5,1};
          \n[c]{c1}{0,0};
          \n[c]{c2}{1,0};
          \n[b]{b2}{0,-1};
          \draw[->] (b1) -- (c1);
          \draw[->] (b1) -- (c2);
          \draw[->] (b2) edge[bend right] (c1);
          \draw[->] (b2) edge[bend left,LegColor] (c1);
          \draw[->] (c1) -- (c2);}};

      \arrsn{mg-c1-c2}{mg-c1-c2-c1c2};

\node[grphnode] (mg-c1-b-bc1) at (-2,-15) {
        \tikz[ingrphdiag]{
          \n[b]{b}{.5,1};
          \n[c]{c1}{0,0};
          \n[c]{c2}{1,.3};
          \n[c]{c3}{1,-.3};
          \draw[->] (b) edge[bend right] (c1);
          \draw[->] (b) edge[bend left,LegColor] (c1);
          \draw[->] (b) edge[bend left] (c2);
          \draw[->] (c1) -- (c2);
          \draw[->] (c1) -- (c3);}};

      \arrsn{mg-c1-b}{mg-c1-b-bc1};

\node[grphnode] (mg-c2*-b-bc1) at (2,-15) {
        \tikz[ingrphdiag]{
          \n[b]{b}{1,1};
          \n[c]{c1}{0,0};
          \n[c]{c2}{1,0};
          \n[c]{c3}{2,0};
          \draw[->] (b) -- (c1);
          \draw[->] (c1) -- (c2);
          \draw[->] (b) edge[bend right] (c2);
          \draw[->] (b) edge[bend left,LegColor] (c2);
          \draw[->] (c2) -- (c3);}};

      \arrsn{mg-c2*-b}{mg-c2*-b-bc1};

\node[grphnode] (mg-c1-c2-b) at (-6,-15) {
        \tikz[ingrphdiag]{
          \n[b]{b}{.5,1};
          \n[c]{c1}{0,0};
          \n[c]{c2}{1,0};
          \draw[->] (b) -- (c1);
          \draw[->] (b) -- (c2);
          \draw[->] (b) edge[bend right] (c1);
          \draw[->] (b) edge[bend left,LegColor] (c1);
          \draw[->] (c1) edge[bend right] (c2);
          \draw[->] (c1) edge[bend left] (c2);}};

      \arrsn{mg-c2-b}{mg-c1-c2-b};
      \arrsn{mg-c1-c2}{mg-c1-c2-b};

\node[grphnode] (mg-c1*-c2*-b) at (6,-15) {
        \tikz[ingrphdiag]{
          \n[b]{b}{.5,1};
          \n[c]{c1}{0,0};
          \n[c]{c2}{1,0};
          \draw[->] (b) -- (c1);
          \draw[->] (b) -- (c2);
          \draw[->] (b) edge[bend right] (c2);
          \draw[->] (b) edge[bend left,LegColor] (c2);
          \draw[->] (c1) edge[bend left] (c2);
          \draw[->] (c2) edge[bend left] (c1);}};

      \arrsn{mg-c1*-c2*}{mg-c1*-c2*-b};
      \arrsn{mg-c2*-b}{mg-c1*-c2*-b};
      \arrsn{mg-c1*-b}{mg-c1*-c2*-b};

\node[grphnode] (mg-c1-c2-b-c1c2) at (-6,-20) {
        \tikz[ingrphdiag]{
          \n[b]{b}{.5,1};
          \n[c]{c1}{0,0};
          \n[c]{c2}{1,0};
          \draw[->] (b) -- (c1);
          \draw[->] (b) -- (c2);
          \draw[->] (b) edge[bend right] (c1);
          \draw[->] (b) edge[bend left,LegColor] (c1);
          \draw[->] (c1) -- (c2);}};

      \arrsn{mg-c1-c2-b}{mg-c1-c2-b-c1c2};
      \arrsn{mg-c1-c2-c1c2}{mg-c1-c2-b-c1c2};

\node[grphnode] (mg-c1-c2-b-bc1) at (-2,-20) {
        \tikz[ingrphdiag]{
          \n[b]{b}{.5,1};
          \n[c]{c1}{0,0};
          \n[c]{c2}{1,0};
          \draw[->] (b) edge[bend right] (c1);
          \draw[->] (b) edge[bend left,LegColor] (c1);
          \draw[->] (b) edge[bend left] (c2);
          \draw[->] (c1) edge[bend right] (c2);
          \draw[->] (c1) edge[bend left] (c2);}};

      \arrsn{mg-c1-c2-b}{mg-c1-c2-b-bc1};
      \arrsn{mg-c1-b-bc1}{mg-c1-c2-b-bc1};

\node[grphnode] (mg-c1*-c2*-b-bc1) at (2,-20) {
        \tikz[ingrphdiag]{
          \n[b]{b}{.5,1};
          \n[c]{c1}{0,0};
          \n[c]{c2}{1,0};
          \draw[->] (b) -- (c1);
          \draw[->] (b) edge[bend right] (c2);
          \draw[->] (b) edge[bend left,LegColor] (c2);
          \draw[->] (c1) edge[bend left] (c2);
          \draw[->] (c2) edge[bend left] (c1);}};

      \arrsn{mg-c1*-c2*-b}{mg-c1*-c2*-b-bc1};
      \arrsn[.05]{mg-c2*-b-bc1}{mg-c1*-c2*-b-bc1};

\node[grphnode] (mg-c1-c2-b-bc1-c1c2) at (-4,-25) {
        \tikz[ingrphdiag]{
          \n[b]{b}{.5,1};
          \n[c]{c1}{0,0};
          \n[c]{c2}{1,0};
          \draw[->] (b) edge[bend right] (c1);
          \draw[->] (b) edge[bend left,LegColor] (c1);
          \draw[->] (b) edge[bend left] (c2);
          \draw[->] (c1) -- (c2);}};

      \arrsn{mg-c1-c2-b-bc1}{mg-c1-c2-b-bc1-c1c2};
      \arrsn{mg-c1-c2-b-c1c2}{mg-c1-c2-b-bc1-c1c2};

    \end{tikzpicture}
  \end{center}
  \caption{The poset of minimal gluings of $G_2$ and $G_1$.
The disjoint sum is at the top.
    Gluings are layered by the number
    of node and edge identifications or,
    equivalently, by the size of their intersection.}\label{fig:bimotor-mgs}
\end{figure*}

\subsection{Discussion}
\label{sec:discussion}
The reasoning and derivation done above
in the DNA walker can be made completely general. Given a \emph{graph observable} $\obs{F}$, meaning a
function counting the number of embeddings of the graph $F$ in the state $X$, one can build an ODE which describes the rate at
which the mean occurrence count $\exp(\obs{F}\hskip-0.3ex(X(t)))$ changes over
time.

Because the underlying Markov process $X(t)$ is generated by graph-rewriting rules,
the one combinatorial ingredient to build that equation is the notion of \emph{minimal gluings} (MGs) of a pair of graphs. Terms in the ODE for $F$ are derived from the set of MGs of $F$ with the left and right sides of the rules which generate $X(t)$.
Besides, each term in $F$'s ODE depends on the current state \emph{only} via expressions of the form $\exp(\obs{G})$ for $G$ a graph defining a new observable.
Thus each fresh observable $\obs{G}$ can then be submitted to the same treatment, and one obtains in general a countable system of rate equations for graphs.
In good cases (as in the walker example), the expansion is finite
and there is no need for any approximation.
In general, one needs to truncate the expansion.
As the MFA expansion is a symbolic procedure one can
pursue it in principle to any order.

The significance of the method hinges both on how many models can be captured in graph-like languages, and how accurate the obtained MFAs are. While these models do no exhaust all possibilities, GTSs seem very expressive.
In addition,
our approach to the derivation of rate equations for graphs uses a
general categorical treatment which subsumes various graph-like
structures such as: hyper-graphs, typed graphs, etc.~\cite{BaldanCHKS14mscs,LackS05ita}.
This abstract view is mathematically convenient, and broadens the set of models to which the method applies.

What we know about the existence of solutions to the (in general) countable ODE systems generated by our method is limited. For general countable
continuous-time Markov chains and observables, existence of a solution is not guaranteed~\cite{spieksma}. Despite their great popularity, the current mathematical understanding of the quality of MFAs only addresses the case of CRNs and density-dependent Markov chains, with Kurtz' theory of
scalings~\cite[Chap.~11]{kurtz}, or the case of dynamics on static graphs~\cite{gleeson}. Some progress on going beyond the formal point of view and obtaining existence theorems for solutions of REs for graphs were reported in Ref.~\cite{Tobiasfossacs}.
Another limitation is the accuracy of MFAs once truncated (as they must be if one wants to plug them in an ODE solver).
Even if an MFA can be built to any desired order, it might still fall short of giving a sensible picture of the dynamics of interest. Finally, it may also be that the cost of running a convincing approximation is about the same as that of simulating the system upfront.

\subsection{Related work}
This paper follows ideas on applying the methods of abstract
interpretation to the differential semantics of \emph{site graph}
rewriting~\cite{chaosruss,sasbsandro,pnasjerome}.
Another more remote influence is Lynch's finite-model theoretic
approach to MFAs~\cite{lynch}.
From the GTS side, the theory of site graph rewriting had long been
thought to be a lucky anomaly until a recent series of work showed
that most of its ingredients could be made sense of, and given a much
larger basis of applications, through the use of algebraic
graph-rewriting
techniques~\cite{jonathantobias,BapodraH10eceasst,Heckel12lncs}.
These latter investigations motivated us to try to address MFA-related questions at a higher level of generality.

\paragraph{Relation to the rule-algebraic approach.}
Another approach to the same broad set of questions started with 
Behr et al.\@ introducing ideas from representation theory 
commonly used in statistical physics~\cite{BehrDG16lics}. They show that one can
study the algebra of rules and their composition law in a way that is decoupled from what is actually 
being re-written. The ``rule-algebraic'' theory allows one to derive Kolmogorov equations for observables based on a systematic use of rule commutators (recently implemented in Ref.~\cite{behr2020commutators}). Interestingly, novel notions of graph rewriting appear~\cite{behr2016algebras}. Partial differential equations describing the generating function of such observables can be derived systematically~\cite{lmcs:6628}. As the theory can handle adhesive categories in general and sesqui-pushout rewriting~\cite{behr2019sesqui}, it offers an treatment of irreversible rewrites alternative to the one presented in this paper. (The rule-algebraic  approach can also handle application conditions~\cite{behr2019compositionality}). It will need further work to precisely pinpoint how these two threads of work articulate both at the theoretical and at the implementation levels.
 
\paragraph{Outline.}
The paper is organised as follows:
\Sec{rewrite} collects preliminaries on graph-rewriting and
establishes the key \emph{forward} and \emph{backward modularity}
lemmas;
\Sec{diff-sem}
derives our main result namely a concrete formula for the action of a
generator associated to a set  of graph-rewriting rules as specified in \Sec{rewrite}. From this formula, the rate equation for graphs follows easily.
Basic category-theoretical definitions needed in the main text are given in \App{defs}; axiomatic proofs in \App{proofs}.

\section{Stochastic graph rewriting}
\label{sec:rewrite}

We turn now to the graphical framework within which we will carry out
the derivation of our generalised rate equation (GREG)
in~\Sec{diff-sem}.
We use a categorical approach know as algebraic graph rewriting,
specifically the \emph{single pushout (SPO)}
approach~\cite{Lowe93tcs,EhrigHKLRWC97gg}.
The reasons for this choice are twofold:
first, we benefit from a solid body of preexisting work;
second, it allows for a succinct and `axiomatic' presentation
abstracting over the details of the graph-like structures that are
being rewritten.
Working at this high level of abstraction allows us to identify a set
of generic properties necessary for the derivation of the GREG without
getting bogged down in the details of the objects being rewritten.
Indeed, while we only treat the case of directed multigraphs (graphs
with an arbitrary number of directed edges between any two nodes) in
this section, the proofs of all lemmas are set in the more general
context of \emph{adhesive categories}~\cite{LackS05ita}
in~\App{proofs}.
This extends the applicability of our technique to rewrite systems
over typed graphs and hypergraphs, among others.

For the convenience of the reader, we reproduce from Ref.~\cite{rc15}
our basic definitions for the category $\Grph$ of \emph{directed
  multigraphs}.
Next, we briefly summarise the SPO approach and its \emph{stochastic
  semantics}~\cite{HeckelLM06fuin}.
We conclude with the \emph{modularity lemmas}, which are key to
the derivation of the GREG in the next section.

\subsection{The category of directed multigraphs}

A \emph{directed multigraph} $G$ consists of
a finite set of \emph{nodes} $V_G$, a finite set of \emph{edges}
$E_G$, and \emph{source} and \emph{target} maps $\src[G], \tgt[G]
\colon E_G \to V_G$.
A \emph{graph morphism} $f \colon G \to H$ between graphs $G$ and $H$
is a pair of maps $f_E \colon E_G \to E_H$, $f_V \colon V_G \to V_H$
which preserve the graph structure, \ie such that for all $e \in E_G$,
\begin{align*}
  \src[H](f_E(e)) &= f_V(\src[G](e)) &\text{ and }& &
  \tgt[H](f_E(e)) &= f_V(\tgt[G](e)).
\end{align*}
The graphs $G$ and $H$ are called the \emph{domain} and
\emph{codomain} of~$f$.
A graph morphism~$f \colon G \to H$ is a \emph{monomorphism}, or
simply a \emph{mono}, if $f_V$ and $f_E$ are injective;
it is a \emph{graph inclusion} if both $f_V$ and $f_E$ are inclusion
maps, in which case $G$ is a \emph{subgraph} of $H$ and we write
$G \subseteq H$.
Every morphism $f \colon G \to H$ induces a subgraph
$f(G) \subseteq H$ called the \emph{direct image} (or just the
\emph{image}) of $f$ in~$H$, such that $V_{f(G)} = f_V(V_G)$ and
$E_{f(G)} = f_E(E_G)$.
\Fig{ex} illustrates a graph and a graph morphism.

\begin{figure}[tb]
  \centering
  \begin{subfigure}[c]{.35\textwidth}\centering
    \begin{tikzpicture}[grphdiag, x=.8cm, y=.6cm]
      \n[c]{c2}{.5,1};
      \n[c]{c3}{0,0};
      \n[c]{c4}{1,0};
      \n[c]{c5}{2,0};
      \draw[->] (c2) -- (c3);
      \draw[->] (c2) -- (c4);
      \draw[->] (c3) -- (c4);
      \draw[->] (c4) -- (c5);
      \draw (c4) edge[->, bend left] (c5);
    \end{tikzpicture}
    \phantomsubcaption\label{fig:ex-grph}
  \end{subfigure}
  \begin{subfigure}[c]{.5\textwidth}\centering
    \begin{tikzpicture}[x=1cm, y=.8cm, thick]
\begin{scope}[>=stealth]
        \n[b]{b2}{.5,1};
        \n[c]{c2}{0,0};
        \n[c]{c3}{1,0};
        \n[c]{c4}{2,0};
        \draw[->, blue]     (b2) -- coordinate(d2)          (c2);
        \draw[->, BrickRed] (b2) -- coordinate[pos=0.4](r3) (c3);
        \draw[->]           (c2) -- coordinate(e2)          (c3);
        \draw[->]           (c3) -- coordinate(e3)          (c4);
      \end{scope}

\begin{scope}[>=stealth, xshift=4cm]
        \n[b]{b1}{0,1};
        \n[c]{c1}{0,0};
        \draw (b1) edge[->, bend right, BrickRed] coordinate(r1) (c1);
        \draw (b1) edge[->, bend left, blue]      coordinate(d1) (c1);
        \draw (c1) edge[->, loop below]           coordinate[pos=.8](e1)
          coordinate[pos=.5](e1b)(c1);
      \end{scope}

\begin{scope}[every edge/.style={
          semithick, draw=black, densely dotted, -to}]
        \draw (b2) edge[bend left] (b1);
        \draw (c2) edge[bend left] (c1);
        \draw (c3) edge[bend left] (c1);
        \draw (c4) edge            (c1);
      \end{scope}

\begin{scope}[every edge/.style={
          semithick, draw=DarkGreen, densely dotted, -to}]
        \draw (e2) edge[bend right] (e1);
        \draw (e3) edge[bend right] (e1);
        \draw (d2) edge[bend left]  (d1);
        \draw (r3) edge[bend left]  (r1);
      \end{scope}

\begin{scope}[on background layer]
        \node[grphdiag-bg, inner sep=4pt, fit=(b1)(e1b)]{};
        \node[grphdiag-bg, inner sep=4pt, fit=(b2)(c2)(c4)]{};
      \end{scope}
    \end{tikzpicture}
    \phantomsubcaption\label{fig:ex-hom}
    \vspace{-\baselineskip}
  \end{subfigure}
  \caption{Examples of \subref{fig:ex-grph}) a directed multigraph,
    \subref{fig:ex-hom}) a graph morphism.}
  \label{fig:ex}
\end{figure}
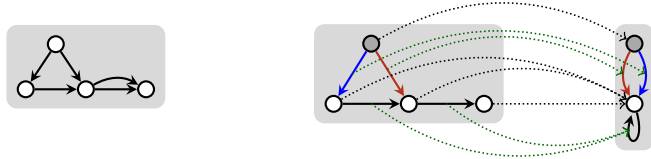

A \emph{partial} graph morphism $p \colon G \rto H$ is a pair of
partial maps $p_V \colon V_G \rto V_H$ and
$p_E \colon E_G \rto E_H$ that preserve the graph structure.
Equivalently,
$p$ can be represented as a \emph{span} of (total) graph morphisms,
that is, a pair of morphisms $p_1 \colon K \to G$,
$p_2 \colon K \to H$ with common domain $K$, where $p_1$ is mono
and $K$ is the \emph{domain of definition} of $p$.  We will
use whichever representation is more appropriate for the task at hand.
Graphs and graph morphisms form the category $\Grph$
(with the obvious notion of composition and identity) while graphs and
partial graph morphisms form the category $\GrphP$.

Graph morphisms provide us with a notion of {pattern matching} on
graphs while partial graph morphisms provide the accompanying notion of
{rewrite rule}.
We restrict pattern matching to monos:
a \emph{match} of a pattern $L$ in a graph $G$
is a monomorphism $f \colon L \to G$.
We write $\mchs{L}{G}$ for the set of matches of $L$ in~$G$.
We also restrict rules:
a \emph{rule} is a partial graph morphism
$\al \colon L \rto R = \tuple{\al_1 \colon K \to L, \al_2 \colon K \to
  R}$ where both $\al_1$ and $\al_2$ are monos.
We say that $L$ and $R$ are $\al$'s left and right hand side (LHS and
RHS).
Rules are special cases of partial graph morphisms and compose as
such.
Given a rule $\al \colon L \rto R = \tuple{\al_1, \al_2}$, we define
the \emph{reverse} rule $\dg{\al} \colon R \rto L$ as the pair
$\dg{\al} := \tuple{\al_2, \al_1}$,
not to be confused with the inverse of $\al$ (which does not exist in
general).
Note that $\dg{-}$ is an involution, that is, $\dg{(\dg{\al})} = \al$.

\subsection{Graph rewriting}

The basic rewrite steps of a GTS are called \emph{derivations}.
We first describe them informally.  \Fig{deriv} shows a commutative square,
with a match $f \colon L \to G$ on the left and a rule $\al \colon L
\rto R$, on top.  The match $f$ identifies the subgraph in $G$ that is to be modified,
while the rule $\al$ describes how to carry out the modification.  In
order to obtain the \emph{comatch} $g \colon R \to H$ on the right,
one starts by
removing nodes and edges from $f(L)$ which do not have a preimage
under $f \comp \al_1$, as well as any edges left dangling (coloured
red in the figure).  To complete the derivation, one extends the
resulting match by adjoining to $D$ the nodes and edges in $R$ that do
not have a preimage under $\al_2$ (coloured green in the figure).

\newcommand{\minigrphdiag}[2]{\node (#1) {\tikz[grphdiag, x=.8cm, y=.8cm, scale=.6,n/.append style={minimum size=6pt}]{#2}$_{#1}$};}
\begin{figure}[tb]
  \centering
  \begin{tikzpicture}[>=cm to]
\matrix[inner sep=3pt, column sep=2.5em, row sep=2em]{
      \minigrphdiag{L}{
        \n[r]{c2}{0,0};
        \n[c]{c3}{1,0};
        \n[p]{c4}{2,0};
        \draw[->, BrickRed] (c2) -- (c3);
      } &
      \minigrphdiag{K}{
        \n[p]{c2}{0,0};
        \n[c]{c3}{1,0};
        \n[p]{c4}{2,0};
      } &
      \minigrphdiag{R}{
        \n[p]{c2}{0,0};
        \n[c]{c3}{1,0};
        \n[g]{c4}{2,0};
        \draw[->, DarkGreen] (c3) -- (c4);
      }\\
      \minigrphdiag{G}{
        \n[c]{c1}{.5,1};
        \n[r]{c2}{0,0};
        \n[c]{c3}{1,0};
        \n[p]{c4}{2,0};
        \draw[->, BrickRed] (c1) -- (c2);
        \draw[->] (c1) -- (c3);
        \draw[->, BrickRed] (c2) -- (c3);
      } &
      \minigrphdiag{D}{
        \n[c]{c1}{.5,1};
        \n[p]{c2}{0,0};
        \n[c]{c3}{1,0};
        \n[p]{c4}{2,0};
        \draw[->] (c1) -- (c3);
      } &
      \minigrphdiag{H}{
        \n[c]{c1}{.5,1};
        \n[p]{c2}{0,0};
        \n[c]{c3}{1,0};
        \n[g]{c4}{2,0};
        \draw[->] (c1) -- (c3);
        \draw[->, DarkGreen] (c3) -- (c4);
      }\\
    };

\draw[->] (L) -- node[left]{$f$} (G);
    \draw[->] (K) -- node[above]{$\al_1$} (L);
    \draw[->] (K) -- (D);
    \draw[->] (K) -- node[above]{$\al_2$} (R);
    \draw[->] (R) -- node[right]{$g$} (H);
    \draw[->] (D) -- node[above]{$\beta_1$} (G);
    \draw[->] (D) -- node[above]{$\beta_2$} (H);
  \end{tikzpicture}
  \caption{A derivation.}
  \label{fig:deriv}
\end{figure}
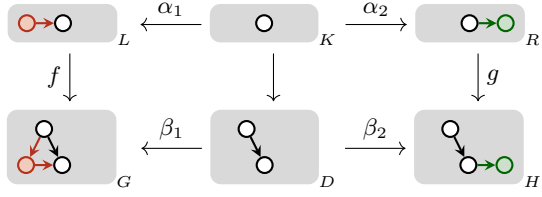

Derivations constructed in this way have the defining property of
\emph{pushout squares}  (PO) in~$\GrphP$, hence the name SPO
for the approach.
Alternatively, one can describe a derivation through the
properties of its inner squares: the left square is the
\emph{final pullback complement} (FPBC) of $\al_1$ and $f$,
while the right one is a PO in $\Grph$~\cite{CorradiniHHK06icgt}.
(Definitions and basic properties of POs and FPBCs are given in \App{defs}.)
\begin{definition}\label{def:deriv}
  A \emph{derivation} of a \emph{comatch} $g \colon R \to H$ from a
  match $f \colon L \to G$ by a rule
$\al = \tuple{\al_1\colon K \to L, \al_2\colon K \to R}$ is a
  diagram in $\Grph$ such as~\eqref{dg:deriv1}, where the left square
  is an FPBC of $f$ and $\al_1$ and the right square is a PO,
  \begin{diags}
    \begin{tikzcd}
      L\dar[swap]{f} & K \lar[swap]{\al_1}\rar{\al_2}\dar{h} &
      R \dar{g}\\
      G & D \lar[swap]{\beta_1} \rar{\beta_2} & H
    \end{tikzcd}~\eqnlabel{dg:deriv1}\hspace{3em}
    \begin{tikzcd}
      L\dar[swap]{f}\rar[rule]{\al} & R \dar{g}\\
      G \rar[rule]{\beta} & H
    \end{tikzcd}~\eqnlabel{dg:deriv2}
  \end{diags}
  with $h$, $g$ matches and $\ba = \tuple{\ba_1, \ba_2}$ a rule,
  called the \emph{corule} of the derivation.
\end{definition}
Equivalently, a derivation of $g$ from $f$ by $\al$ is a PO
in~$\GrphP$ as in~\eqref{dg:deriv2}, with corule~$\ba$.
We will mostly use this second characterisation of derivations.

Write $f \rwto[\al] g$ if there is a derivation of $g$ from $f$ by
$\al$.
Since derivations are POs of partial morphisms and $\GrphP$ has all
such POs~\cite{Lowe93tcs}, the relation $\rwto[\al]$~is \emph{total},
that is, for any match $f$ and rule $\al$ (with common domain), we
can find a comatch $g$.
However, the converse is not true:
not every match $g$ having the RHS of $\alpha$ as its domain is a
comatch of a derivation by $\al$.
Which is to say, there might not exist $f$ such that $f \rwto[\al] g$
(the relation $\rwto[\al]$ is not surjective).
When there is such an $f$, we say $g$ is \emph{derivable} by $\al$.
Consider the example in \Fig{deriv-ex}.
Here, $g$ is $\al$-derivable (as witnessed by $f$) but $h$ is not:
no match of the LHS could contain a ``preimage'' of the extra (red)
edge $e$ in the codomain of $h$ because the target node of
$e$ has not yet been created.

We say a derivation $f \rwto[\al] g$ (with corule $\ba$) is
\emph{reversible} if $g \rwto[\dg{\al}] f$ (with corule $\dg{\ba}$),
and \emph{irreversible} otherwise.
Clearly, derivations are not reversible in general, otherwise
$\rwto[\al]$ would be surjective.
Consider the derivation shown in \Fig{deriv}.
The derivation removes two (red) edges from the codomain of~$f$;
the removal of the lower edge is specified in the LHS of $\al$,
whereas the removal of the upper edge is a \emph{side effect} of
removing the red node to which the edge is connected (graphs cannot
contain dangling edges).
Applying the reverse rule $\dg{\al}$ to the comatch $g$ restores the
red node and the lower red edge, but not the upper red edge.
In other words, $f$ is not $\dg{\al}$-derivable, hence the derivation
in \Fig{deriv} is irreversible.
In previous work, we have shown how to derive rate equations for graph
transformation systems with only reversible
derivations~\cite{sasbsandro,icfem,rc15}.
In \Sec{diff-sem}, we overcome this limitation, giving a procedure
that extends to the irreversible case.

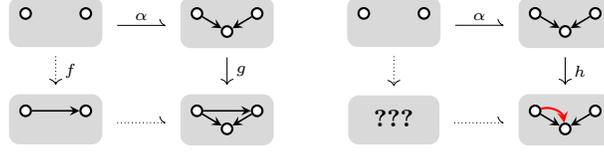
\begin{figure}[tb]
  \centering
  \begin{tikzcd}
    \conc{\n[c]{p1}{0, 0.9};
      \n[c]{p2}{2, 0.9};
      \n[p]{c}{1, 0.3};
    } \rar[rule]{\al}
    \dar[densely dotted]{\, f} &[1ex]
    \conc{\n[c]{p1}{0, 0.9};
      \n[c]{p2}{2, 0.9};
      \n[c]{c}{1, 0.3};
      \draw[->] (p1) -- (c);
      \draw[->] (p2) -- (c);
    } \dar{\, g}\\[-3pt]
    \conc{\n[c]{p1}{0, 0.9};
      \n[c]{p2}{2, 0.9};
      \n[p]{c}{1, 0.3};
      \draw[->] (p1) -- (p2);
    } \rar[rule, densely dotted] & \conc{\n[c]{p1}{0, 0.9};
      \n[c]{p2}{2, 0.9};
      \n[c]{c}{1, 0.3};
      \draw[->] (p1) -- (p2);
      \draw[->] (p1) -- (c);
      \draw[->] (p2) -- (c);
    }
  \end{tikzcd}\hspace{1.5em}
  \begin{tikzcd}
    \conc{\n[c]{p1}{0, 0.9};
      \n[c]{p2}{2, 0.9};
      \n[p]{c}{1, 0.3};
    } \rar[rule]{\al} \dar[densely dotted] &[1ex]
    \conc{\n[c]{p1}{0, 0.9};
      \n[c]{p2}{2, 0.9};
      \n[c]{c}{1, 0.3};
      \draw[->] (p1) -- (c);
      \draw[->] (p2) -- (c);
    } \dar{\, h}\\[-3pt]
    \conc{\n[p]{p1}{0, 0.9};
      \n[p]{p2}{2, 0.9};
      \n[p]{c}{1, 0.3};
      \node[inner sep=0] at (1, 0.4) {\bf ???};
    }  \rar[rule, densely dotted] & \conc{\n[c]{p1}{0, 0.9};
      \n[c]{p2}{2, 0.9};
      \n[c]{c}{1, 0.3};
      \draw (p1) edge[->, bend left=50, thick, red] (c);
      \draw[->] (p1) -- (c);
      \draw[->] (p2) -- (c);
    }
  \end{tikzcd}
  \caption{The match $g$ is $\al$-derivable, while $h$ is not.}
  \label{fig:deriv-ex}
\end{figure}

Since POs are unique (up to unique isomorphism), $\rwto[\al]$ is also
\emph{functional} (up to isomorphism).
The fact that derivations are only defined up to isomorphism is
convenient as it allows us to manipulate them without paying attention
to the concrete naming of nodes and edges.
Without this flexibility, stating and proving properties such as
\Lem[Lemmas]{bmod} and \ref{lem:derivability} below would be
exceedingly cumbersome.
On the other hand, when defining the stochastic semantics of our
rewrite systems, it is more convenient to restrict $\rwto[\al]$ to a
properly functional relation.
To this end, we fix once and for all, for any given match
$f \colon L \to G$ and rule $\al \colon L \rto R$, a
\emph{representative} $f \rwto[\al] \al(f)$ from the corresponding
isomorphism class of derivations, with (unique) comatch
$\al(f) \colon R \to H$, and (unique) corule $f(\al) \colon G \rto H$.

A set of rules $\rset$ thus defines a \emph{labelled transition system
  (LTS)} over graphs, with corules as transitions, labelled by the
associated pair $\tuple{f, \al}$.
Given a rule $\al \colon L \rto R$, we define a stochastic rate matrix
$Q_\al := (q^\al_{G H})$ over graphs as follows.
\begin{alignat}{2}
  q^\al_{G H} & :=
  \abs{\setof{f \in \mchs{L}{G}}{\al(f) \in \mchs{R}{H}}}
  &\qquad& \text{ for $G \neq H$,}
  \nonumber\\
  q^\al_{G G} & := \tsum_{H \neq G} -q^\al_{G H}
  && \text{ otherwise.}
  \label{eq:rates}
\end{alignat}
Given a \emph{model}, that is to say a finite set of rules $\rset$ and
a \emph{rate map} $k\colon \rset \to \RR^+$, we define the model rate
matrix $Q(\rset,k)$ as
\begin{align}
  Q(\rset,k) & := \tsum_{\al\in \rset} k(\al) Q_\al
\end{align}
Thus a model defines a CTMC over $\GrphP$.
As $\rset$ is finite, $Q(\rset,k)$ is row-finite.

\subsection{Composition and modularity of derivations}
\label{sec:deriv-prop}

By the well-known Pushout Lemma,
derivations can be composed horizontally (rule composition) and
vertically (rule specialisation) in the sense that if inner
squares below are derivations, so are the outer ones:
\begin{diags}\begin{tikzcd}
    L \rar[rule]{\al_1}\dar[swap]{f} &
    R_1 \rar[rule]{\al_2}\dar{g_1} & R_2 \dar{g_2} \\
    G \rar[rule] & H_1 \rar[rule] & H_2
  \end{tikzcd}\hspace{3em}
  \begin{tikzcd}[row sep=1.4em]
    L \rar[rule]{\al_1}\dar[swap]{f_1} & R \dar{g_1} \\
    G_1 \rar[rule]{\al_2}\dar[swap]{f_2} & H_1 \dar{g_2} \\
    G_2 \rar[rule] & H_2
  \end{tikzcd}\end{diags}
Derivations can also be decomposed vertically.  First, one has
a forward decomposition (which follows immediately from pasting of POs
in~$\GrphP$):
\begin{lemma}[Forward modularity]\label{lem:fmod}
  Let $\al$, $\ba$, $\gamma$ be rules and
  $f_1$, $f_2$, $g$, $g_1$ matches such that diagrams~\eqref{dg:fmod1} and \eqref{dg:fmod2} are
  derivations.
Then there is a unique match $g_2$ such that
  diagram~\eqref{dg:fmod3} commutes (in~$\GrphP$)
  and is a vertical composition of derivations.
\begin{diags}
    \begin{tikzcd}
      L \dar[swap]{f_1}\rar[rule]{\al} & R \arrow{dd}{g}\\
      S \dar[swap]{f_2}\\
      G \rar[rule]{\beta}              & H
    \end{tikzcd}~\eqnlabel{dg:fmod1}\hspace{1em}
    \begin{tikzcd}
      L \dar[swap]{f_1}\rar[rule]{\al}    & R \dar{g_1}\\
      S \rar[rule]{\gamma}                & T
    \end{tikzcd}~\eqnlabel{dg:fmod2}\hspace{1em}
    \begin{tikzcd}
      L \dar[swap]{f_1}\rar[rule]{\al}    & R \dar[swap]{g_1}\arrow[bend left=40]{dd}{g}\\
      S \dar[swap]{f_2}\rar[rule]{\gamma} & T \dar[swap, dashed]{g_2}\\
      G \rar[rule]{\beta}                 & H
    \end{tikzcd}~\eqnlabel{dg:fmod3}
  \end{diags}
\end{lemma}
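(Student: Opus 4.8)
The plan is to argue entirely in $\GrphP$, using the characterisation of derivations as pushout squares of the form~\eqref{dg:deriv2} and the Pushout Lemma. Note that~\eqref{dg:fmod1} is precisely the outer square of~\eqref{dg:fmod3} and that~\eqref{dg:fmod2} is precisely its upper square; both are pushouts in $\GrphP$ by hypothesis. So it suffices to produce the morphism $g_2$ and then to check that the lower square of~\eqref{dg:fmod3} is again a derivation.

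First I would obtain $g_2$ from the universal property of the upper pushout. By~\eqref{dg:fmod2}, the object $T$ together with $\gamma$ and $g_1$ is the pushout in $\GrphP$ of $\al$ and $f_1$ (both with domain $L$). The pair $\ba \comp f_2 \colon S \rto H$ and $g \colon R \to H$ forms a cocone over that span, since $(\ba \comp f_2) \comp f_1 = \ba \comp (f_2 \comp f_1) = g \comp \al$ by commutativity of~\eqref{dg:fmod1}. Hence there is a unique partial morphism $g_2 \colon T \rto H$ with $g_2 \comp \gamma = \ba \comp f_2$ and $g_2 \comp g_1 = g$. These two identities are exactly what it means for~\eqref{dg:fmod3} to commute, and uniqueness of $g_2$ among partial morphisms (a fortiori among matches) making~\eqref{dg:fmod3} commute is immediate.

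Next I would invoke the cancellation half of the Pushout Lemma: in a vertical pasting of two commutative squares, if the upper square and the whole rectangle are pushouts, then so is the lower square. Here the upper square is a pushout (just used) and the outer rectangle is~\eqref{dg:fmod1}, a derivation and hence a pushout; therefore the lower square of~\eqref{dg:fmod3} --- the commutative square with top $\gamma$, left $f_2$, bottom $\ba$ and right $g_2$ --- is a pushout in $\GrphP$. Since $\gamma$ is a rule (it is the corule of~\eqref{dg:fmod2}) and $f_2$ is a match, and since, as recalled above, every pushout in $\GrphP$ of a rule along a match is a derivation, this lower square is a derivation of $g_2$ from $f_2$ by $\gamma$ with corule $\ba$; in particular $g_2$ is a match. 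Finally, as~\eqref{dg:fmod3} is, by the construction of $g_2$, the vertical pasting of the derivation~\eqref{dg:fmod2} with this lower derivation, it is a vertical composition of derivations, completing the proof.

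I do not expect a real obstacle: as the excerpt itself notes, this forward decomposition ``follows immediately from pasting of POs in $\GrphP$''. The only two points deserving attention are that the $g_2$ returned by the universal property is total and monic --- taken care of by the fact that pushouts of rules along matches in $\GrphP$ are always derivations --- and that the corule of the freshly constructed lower derivation is the very same $\ba$ appearing in~\eqref{dg:fmod1}, which is evident since that edge of the diagram is never modified. (The genuine difficulty is reserved for the dual statement, the backward modularity \Lem{bmod}, precisely because final pullback complements do not paste and cancel as cleanly as pushouts.)
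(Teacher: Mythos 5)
Your proposal is correct and follows essentially the same route as the paper's proof: obtain $g_2$ from the universal property of the pushout~\eqref{dg:fmod2} and then use the cancellation direction of the Pasting Lemma for pushouts in $\GrphP$ to conclude that the lower square of~\eqref{dg:fmod3} is again a pushout, hence a derivation. Your extra remark that $g_2$ is indeed a match (total and monic), because pushouts of rules along matches in $\GrphP$ are derivations, only makes explicit a point the paper leaves implicit.
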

A novel observation, which will play a central role in the next
section, is that one also has a backward decomposition:
\begin{lemma}[Backward modularity]\label{lem:bmod}
Let $\al$, $\ba$, $\gamma$ be rules and
  $f$, $f_1$, $g_1$, $g_2$ matches such that diagrams~\eqref{dg:bmod1} and \eqref{dg:bmod2} are
  derivations.
Then there is a unique match $f_2$ such that diagram~\eqref{dg:bmod3} commutes (in~$\GrphP$)
  and is a vertical composition of derivations.
  \begin{diags}
    \begin{tikzcd}
      L \arrow{dd}[swap]{f}\rar[rule]{\al} & R \dar{g_1}\\
                                           & T \dar{g_2}\\
      G \rar[rule]{\beta}                  & H
    \end{tikzcd}~\eqnlabel{dg:bmod1}\hspace{1em}
    \begin{tikzcd}
      L \dar[swap]{f_1} & R \dar{g_1}\lar[elur, swap]{\; \dg{\al}}\\
      S                 & T \lar[elur, swap]{\; \dg{\gamma}}
    \end{tikzcd}~\eqnlabel{dg:bmod2}\hspace{1em}
    \begin{tikzcd}
      L \arrow[bend right=40]{dd}[swap]{f}\rar[rule]{\al}\dar{f_1} &
      R \dar{g_1}\\
      S \dar[dashed]{f_2}\rar[rule]{\gamma} & T \dar{g_2}\\
      G \rar[rule]{\beta}                   & H
    \end{tikzcd}~\eqnlabel{dg:bmod3}
  \end{diags}
\end{lemma}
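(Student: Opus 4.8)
The plan is to argue entirely with the $\Grph$-level presentation of a derivation from \Def{deriv} — left square a final pullback complement (FPBC), right square a pushout (PO) — since this exposes the asymmetry that makes the statement non-trivial. Unlike \Lem{fmod}, which drops out of PO pasting in $\GrphP$, there is no composite match here to decompose, so the derivation $f_1 \rwto[\al] g_1$ must be built by hand. Two facts about adhesive categories do the work: (i)~a PO along a mono is also a pullback, and is moreover the FPBC of the induced composable pair; (ii)~a PO along a mono is a van Kampen square, hence stable under pullback along a mono. Write $\al = \tuple{\al_1\colon K\to L,\ \al_2\colon K\to R}$ and $\gamma = \tuple{\gamma_1\colon M\to S,\ \gamma_2\colon M\to T}$, and let $\bar h\colon K\to M$ be their common source.

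First I would unfold \eqref{dg:bmod2}, read as a derivation $g_1 \rwto[\dg{\al}] f_1$ with corule $\dg{\gamma}$: its left square, over $\al_2$ and $\bar h$ with $g_1\colon R\to T$ and $\gamma_2\colon M\to T$, is the FPBC of $\al_2$ and $g_1$; its right square, over $\al_1$ and $\bar h$ with $f_1\colon L\to S$ and $\gamma_1\colon M\to S$, is a PO. Since $\al_1$ is mono, by (i) that right square is the FPBC of $\al_1$ and $f_1$ — exactly the left square a derivation $f_1 \rwto[\al] g_1$ with corule $\gamma$ would require. Everything therefore reduces to showing that the FPBC of $\al_2$ and $g_1$ is in fact a PO; this is the one point where irreversibility truly matters, and it is where \eqref{dg:bmod1} enters.

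Unfold \eqref{dg:bmod1} as a derivation $f \rwto[\al] g_2 g_1$ with corule $\beta = \tuple{\beta_1\colon D\to G,\ \beta_2\colon D\to H}$; its right square is a PO over $\al_2$ and $h\colon K\to D$ with cospan $R\xrightarrow{g_2 g_1}H\xleftarrow{\beta_2}D$. Form the pullback $N := D\times_H T$ along the monos $\beta_2,g_2$. As $g_2 g_1 = g_2\circ g_1$ and $g_2 g_1 \al_2 = g_2\circ(g_1\al_2)$ both factor through the mono $g_2$, pulling that PO back along $g_2$ — legitimate by (ii) — makes the induced square over $\al_2$ and the canonical $K\to N$, with legs $g_1\colon R\to T$ and $N\to T$, a PO; by (i) it is again the FPBC of $\al_2$ and $g_1$, so uniqueness of FPBCs identifies $N$ with $M$. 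In particular the FPBC of $\al_2$ and $g_1$ is a PO, which completes the proof that the top square of \eqref{dg:bmod3} is a derivation with corule $\gamma$; moreover transporting $\pi_D\colon N\to D$ along this identification yields $\mu\colon M\to D$ with $\mu\bar h = h$ and $\beta_2\mu = g_2\gamma_2$, and PO cancellation (the just-established $K,R,M,T$-PO pasted with the square $(\gamma_2,\mu;g_2,\beta_2)$ gives back the $K,R,D,H$-PO above) shows $(\gamma_2,\mu;g_2,\beta_2)$ is a PO as well.

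It remains to produce $f_2$ and the bottom square. The left square of the top square being a PO makes $S$ the pushout of $L\xleftarrow{\al_1}K\xrightarrow{\bar h}M$, so the cocone $(f\colon L\to G,\ \beta_1\mu\colon M\to G)$ — compatible since $\beta_1\mu\bar h = \beta_1 h = f\al_1$ — induces the unique $f_2\colon S\to G$ with $f_2 f_1 = f$ and $f_2\gamma_1 = \beta_1\mu$. Pasting $(\gamma_1,\mu;f_2,\beta_1)$ under that PO reproduces the left square of \eqref{dg:bmod1} (a PO, being an FPBC along the mono $\al_1$), so PO cancellation makes $(\gamma_1,\mu;f_2,\beta_1)$ a PO, hence by (i) the FPBC of $\gamma_1$ and $f_2$. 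Together with $(\gamma_2,\mu;g_2,\beta_2)$ this is the $\Grph$-diagram of a derivation $f_2 \rwto[\gamma] g_2$ with corule $\beta$, i.e.\ the bottom square of \eqref{dg:bmod3}; vertical pasting of POs in $\GrphP$ then shows \eqref{dg:bmod3} commutes, is a vertical composition of derivations, and composes to \eqref{dg:bmod1}. Uniqueness of $f_2$ follows since any witness forces the same $\mu$ (PO-complements along monos are unique) and hence the same map into $G$ by the universal property of the pushout $S$. I expect the van Kampen step — restricting the pushout complement of $\al$ at the comatch $g_2 g_1$ along the mono $g_2$ down to one at $g_1$ — to be the main obstacle; everything else is pasting and cancellation of pushouts together with the adhesive fact (i).
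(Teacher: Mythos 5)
Your first half is sound and is essentially the paper's own construction: pulling the right-hand pushout underlying~\eqref{dg:bmod1} back along the mono $g_2$, using stability (van Kampen) to get a pushout over $g_1$, identifying the resulting object with $M$ by uniqueness of final pullback complements, and then obtaining $f_2$ from the universal property of the pushout $S$; the pasting argument for $(\gamma_2,\mu;g_2,\beta_2)$ is also fine. The argument breaks at the parenthetical claim that the left square of~\eqref{dg:bmod1} is a pushout ``being an FPBC along the mono $\al_1$''. Only the converse of your fact~(i) holds: pushouts along monos are FPBCs, but an FPBC along a mono need not be a pushout --- this failure is exactly what makes derivations irreversible (in \Fig{deriv} the FPBC square deletes a dangling edge and is not a pushout). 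So your cancellation step has no outer pushout to cancel against, and its conclusion, that $(\gamma_1,\mu;f_2,\beta_1)$ is a pushout, is false in general: take~\eqref{dg:bmod2} to be the trivial decomposition $S=L$, $T=R$, $\gamma=\al$, $f_1=\id_L$, $g_1=\id_R$; then the bottom square of~\eqref{dg:bmod3} is \eqref{dg:bmod1} itself, whose left square is not a pushout whenever that derivation is irreversible. What has to be shown is only that $(\gamma_1,\mu;f_2,\beta_1)$ is the \emph{final pullback complement} of $\gamma_1$ and $f_2$, and the paper reaches this by a different route: apply forward modularity (\Lem{fmod}) to \eqref{dg:bmod1} and the already-established derivation $f_1\rwto[\al]g_1$ along the factorisation $f=f_2\comp f_1$, and then identify the middle morphism and comatch that \Lem{fmod} produces with $\mu$ and the given $g_2$, using monicity of $\beta_1$ and the mediating-morphism uniqueness of the top-right pushout.

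There is a second gap: you never show that $f_2$ is a monomorphism, \ie a match. Without this the bottom square of~\eqref{dg:bmod3} is not a derivation even once the FPBC property is secured, the lemma's claim of a unique \emph{match} $f_2$ is not established, your appeal to fact~(i) at the end would not apply anyway, and \Lem{fmod} (the natural repair of the previous step) cannot be invoked for the factorisation $f=f_2\comp f_1$. In the paper this takes a dedicated argument: form the pushout $S\to[i_1]G'\gets[i_2]D$ of $S\gets[\gamma_1]M\to[\mu]D$; its mediating morphism $u\colon G'\to G$ also mediates for the pasted pushout over the left square underlying~\eqref{dg:bmod1}, which exhibits $G'$ as the union of the subobjects $f$ and $\beta_1$ of $G$, so $u$ is monic by adhesivity (Lack--Soboci\'nski, Theorem~5.1) and hence so is $f_2=u\comp i_1$. (Your uniqueness sketch is also looser than needed: the clean argument is that any competing middle morphism $\mu'$ satisfies $\beta_2\comp\mu'=g_2\comp\gamma_2=\beta_2\comp\mu$, hence $\mu'=\mu$ since $\beta_2$ is monic, and then the universal property of the pushout $S$ forces $f_2'=f_2$; but that is presentational, not a gap.)
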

Forward and backward modularity look deceptively similar, but while
\Lem{fmod} is a standard property of POs,
\Lem{bmod} is decidedly non-standard.
Remember that derivations are generally irreversible.
It is therefore not at all obvious that one should be able to
transport factorisations of comatches backwards along a rule, let
alone in a unique fashion.
Nor is it obvious that the top half of the resulting decomposition
should be reversible.
The crucial ingredient that makes backward modularity possible is that
both matches and rules are monos.
Because rules are (partial) monos, we can reverse $\al$ and $\ba$
in~\eqref{dg:bmod1}, and the resulting diagram still commutes (though
it is no longer a derivation in general).
The existence and uniqueness of $f_2$ is then a direct consequence of
the universal property of~\eqref{dg:bmod2}, seen as a PO.
The fact that~\eqref{dg:bmod2} is reversible relies on matches also
being monos, but in a more subtle way.
Intuitively, the graph $T$ cannot contain any superfluous edges of the
sort that render the derivation in~\Fig{deriv} irreversible because,
$g_2$ being a mono, such edges would appear in $H$ as subgraphs,
contradicting the $\al$-derivability of $g_2 \comp g_1$.
Together, the factorisation of $f$ and the reversibility
of~\eqref{dg:bmod2} then induce the decomposition in~\eqref{dg:bmod3}
by \Lem{fmod}.
A full, axiomatic proof of \Lem{bmod} is given in \App{bmod}.

Among other things, \Lem{bmod} allows one to relate
\emph{derivability} of matches to \emph{reversibility} of derivations:
\begin{lemma}\label{lem:derivability}
  A match $g \colon R \to H$ is derivable by a rule $\al \colon L
  \rto R$ if and only if the derivation $g \rwto[\dg{\al}] f$ is
  reversible.\end{lemma}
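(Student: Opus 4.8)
The plan is to treat the two implications separately, the non-trivial one being a direct consequence of backward modularity (\Lem{bmod}). Write $f := \dg{\al}(g) \colon L \to G$ for the representative comatch of the derivation $g \rwto[\dg{\al}] f$ (it exists since $\rwto[\dg{\al}]$ is total) and let $\delta \colon H \rto G$ be its corule; by \Def{deriv} corules are rules, so $\gamma := \dg{\delta}$ is again a rule, with $\dg{\gamma} = \delta$.

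One direction is immediate from the definitions: if the derivation $g \rwto[\dg{\al}] f$ is reversible, then $f \rwto[\al] g$ holds (with corule $\dg{\delta}$), and this derivation witnesses that $g$ is derivable by $\al$. For the converse, assume $g$ is derivable by $\al$, say via a derivation $f' \rwto[\al] g$ with corule $\ba'$. The idea is to feed the two derivations $f' \rwto[\al] g$ and $g \rwto[\dg{\al}] f$ into \Lem{bmod} using a \emph{trivial bottom row}: instantiate the schema of that lemma by taking its $g_1 := g$, $g_2 := \id_H$ (hence $T := H$), $\gamma := \dg{\delta}$, and identifying its $f$ with $f'$. Then diagram~\eqref{dg:bmod1} becomes precisely the derivation $f' \rwto[\al] g$ with corule $\ba'$, and diagram~\eqref{dg:bmod2} becomes precisely the derivation $g \rwto[\dg{\al}] f$ with corule $\delta = \dg{\gamma}$ (so that its $f_1$ is $f$). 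Both are genuine derivations, so \Lem{bmod} yields a match $f_2$ turning~\eqref{dg:bmod3} into a vertical composition of derivations, whose \emph{top} square is a derivation $f \rwto[\al] g$ with corule $\gamma = \dg{\delta}$ --- exactly the assertion that $g \rwto[\dg{\al}] f$ is reversible. One also reads off $f' = f_2 \comp f$, so every witness of $\al$-derivability of $g$ factors through the canonical reverse derivation.

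Granted \Lem{bmod}, the proof is short, and the only care required is the bookkeeping of reverse rules and corules: that the corule $\delta$ of $g \rwto[\dg{\al}] f$ is itself a rule (so that $\dg{\delta}$ is meaningful), that \eqref{dg:bmod2} with $\gamma := \dg{\delta}$ has exactly the shape demanded by \Lem{bmod}, and that the top square of~\eqref{dg:bmod3} meets the definition of reversibility \emph{on the nose} --- its corule being literally $\dg{\delta}$, not merely isomorphic to it. The conceptual content, namely that the upper half of a backward decomposition is automatically a reversible derivation, already lives inside \Lem{bmod}; the present statement just extracts it in the degenerate case where the lower rule is an identity. I therefore expect no genuine obstacle here beyond \Lem{bmod} itself, which is where the real difficulty of the irreversible case resides.
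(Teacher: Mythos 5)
Your proof is correct and matches the paper's own argument: the non-trivial direction is obtained exactly as in the paper by instantiating backward modularity (\Lem{bmod}) with $g_1 = g$, the trivial factor $g_2 = \id_H$, and $f_1$ the comatch of the $\dg{\al}$-derivation, so that the top square of~\eqref{dg:bmod3} is the required reverse derivation $f \rwto[\al] g$ with corule $\dg{\delta}$, while the converse direction is immediate from the definition of reversibility. Your explicit bookkeeping of $g_2 = \id_H$ and of the corules is only making precise what the paper leaves implicit.
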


\subsection{Gluings}

Given $G_1 \subseteq H$ and $G_2 \subseteq H$, the \emph{union} of
$G_1$ and $G_2$ in $H$ is the unique subgraph $G_1 \cup G_2$
of $H$, such that $V_{(G_1 \cup G_2)} = V_{G_1} \cup V_{G_2}$
and $E_{(G_1 \cup G_2)} = E_{G_1} \cup E_{G_2}$.  The
\emph{intersection} $(G_1 \cap G_2) \subseteq H$ is defined
analogously.  The subgraphs of $H$ form a complete distributive
lattice with $\cup$ and $\cap$ as the join and meet operations.
One can \emph{glue} arbitrary graphs as follows:
\begin{definition}\label{def:mgs}
  A \emph{gluing} of graphs $G_1$, $G_2$ is
a pair of matches
  $i_1 \colon G_1 \to U$, $i_2 \colon G_2 \to U$ with common
  codomain $U$; if in addition $U = i_1(G_1) \cup i_2(G_2)$,
  one says the gluing is \emph{minimal}.
\end{definition}
Two gluings
$i_1 \colon G_1 \to U$, $i_2 \colon G_2 \to U$ and
$j_1 \colon G_1 \to V$, $j_2 \colon G_2 \to V$
are said to be \emph{isomorphic}
if there is an isomorphism
$u \colon U \to V$, such that
$j_1 = u \comp i_1$ and
$j_2 = u \comp i_2$.
We write
$\MGCls{G_1}{G_2}$ for the set of isomorphism classes of minimal
gluings (MG) of $G_1$ and $G_2$, and
$\MGs{G_1}{G_2}$ for an arbitrary choice of representatives from each class in
$\MGCls{G_1}{G_2}$.
Given a gluing $\mu \colon G_1 \to H \gets G_2$, denote by $\hat{\mu}$
its ``tip'', \ie the common codomain $\hat{\mu} = H$ of $\mu$.

It is easy to see the following (see \App{proofs}
for an axiomatic proof):
\begin{lemma}\label{lem:mgs}
  Let $G_1$, $G_2$ be graphs, then
$\MGs{G_1}{G_2}$ is finite, and
for every gluing
  $f_1 \colon G_1 \to H$, $f_2 \colon G_2 \to H$,
there is a unique MG
$i_1 \colon G_1 \to U$, $i_2 \colon G_2 \to U$
  in $\MGs{G_1}{G_2}$ and match $u \colon U \to H$
  such that $f_1 = u \comp i_1$ and $f_2 = u \comp i_2$.
\end{lemma}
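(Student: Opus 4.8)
The plan is to reduce all three assertions---finiteness, existence, and uniqueness---to the single observation that a minimal gluing of $G_1$ and $G_2$ is exactly the \emph{image} of an arbitrary gluing of $G_1$ and $G_2$. For finiteness, note that minimality of a gluing $i_1 \colon G_1 \to U$, $i_2 \colon G_2 \to U$ forces $U = i_1(G_1) \cup i_2(G_2)$, hence $\abs{V_U} \le \abs{V_{G_1}} + \abs{V_{G_2}}$ and $\abs{E_U} \le \abs{E_{G_1}} + \abs{E_{G_2}}$. Since $G_1$ and $G_2$ are finite, only finitely many graphs $U$ (up to isomorphism) satisfy these bounds, and each of them admits only finitely many pairs of monos $G_1 \to U$ and $G_2 \to U$; so $\MGCls{G_1}{G_2}$ is finite. (In the axiomatic treatment of \App{proofs}, this cardinality count is replaced by the fact that, in the categories under consideration, a fixed finite object admits only finitely many monos, up to isomorphism, into any object.)

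For existence, let $f_1 \colon G_1 \to H$, $f_2 \colon G_2 \to H$ be a gluing and set $U' := f_1(G_1) \cup f_2(G_2) \subseteq H$. Corestricting $f_1$ and $f_2$ to $U'$ yields monos $i_1' \colon G_1 \to U'$, $i_2' \colon G_2 \to U'$ with $U' = i_1'(G_1) \cup i_2'(G_2)$, so $(i_1', i_2')$ is a minimal gluing, and the inclusion $\iota \colon U' \incin H$ satisfies $f_1 = \iota \comp i_1'$, $f_2 = \iota \comp i_2'$. Let $(i_1, i_2)$, with tip $U$, be the representative in $\MGs{G_1}{G_2}$ of the isomorphism class of $(i_1', i_2')$, and $w \colon U \to U'$ the isomorphism with $i_1' = w \comp i_1$, $i_2' = w \comp i_2$. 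Then $u := \iota \comp w \colon U \to H$ is a match with $f_1 = u \comp i_1$ and $f_2 = u \comp i_2$, as required. (Axiomatically, $U'$ is the mono part of the (epi, mono)-factorisation of the copairing $\copair{f_1}{f_2} \colon G_1 \duni G_2 \to H$, and minimality amounts to $\copair{i_1}{i_2}$ being epi.)

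For uniqueness, suppose $(i_1, i_2) \in \MGs{G_1}{G_2}$ has tip $U$ and $u \colon U \to H$ is a match with $f_1 = u \comp i_1$, $f_2 = u \comp i_2$. Direct images commute with unions, so $u(U) = u\bigl(i_1(G_1) \cup i_2(G_2)\bigr) = u(i_1(G_1)) \cup u(i_2(G_2)) = f_1(G_1) \cup f_2(G_2) = U'$; since $u$ is mono it corestricts to an isomorphism $\bar u \colon U \to U'$, and $\iota \comp \bar u = u$ together with $\iota$ mono gives $\bar u \comp i_1 = i_1'$ and $\bar u \comp i_2 = i_2'$, that is, $\bar u$ is an isomorphism of gluings from $(i_1, i_2)$ to $(i_1', i_2')$. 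Hence $(i_1, i_2)$ lies in the isomorphism class of $(i_1', i_2')$, and as $\MGs{G_1}{G_2}$ holds exactly one representative per class, $(i_1, i_2)$ is the pair produced in the existence step. Finally $u$ is itself forced: $i_1$ and $i_2$ being monos, $u$ is determined on $i_1(G_1)$ by $f_1$ and on $i_2(G_2)$ by $f_2$, and $U = i_1(G_1) \cup i_2(G_2)$, so $u$ coincides with the match built above.

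The only real difficulty is bookkeeping: keeping the ``up to isomorphism'' layer (the chosen representatives in $\MGs{G_1}{G_2}$, the isomorphisms $w$ and $\bar u$) cleanly separated from the strict equalities, and---for the axiomatic version---verifying that unions of subgraphs are faithfully rendered by (epi, mono)-factorisations of copairings in an adhesive category, so that $\copair{i_1}{i_2}$ being epi is the correct notion of minimality and the relevant image stays stable under the mono $u$. No individual step is deep, which matches the informal claim that the lemma is ``easy to see''.
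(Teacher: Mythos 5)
Your proof is correct, but it is not the route the paper takes: you argue concretely in $\Grph$, whereas the paper's proof (in the appendix) is axiomatic, set in an adhesive category with binary products and finitely powered objects. For existence, the two arguments are morally the same -- your ``image union'' $U' = f_1(G_1)\cup f_2(G_2)$ is exactly the paper's subobject union of $f_1$ and $f_2$ in $\Sub(H)$, constructed there as a pushout over a pullback, with adhesivity used to check that the resulting cospan is minimal; your uniqueness step is in fact spelled out in more detail than the paper's rather terse ``unique up to isomorphism'' remark, and the explicit argument (images commute with unions, a mono corestricts to an isomorphism onto its image, and $u$ is determined on $i_1(G_1)\cup i_2(G_2)$) is sound. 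For finiteness the routes genuinely differ: you bound the size of the tip ($\abs{V_U}\le\abs{V_{G_1}}+\abs{V_{G_2}}$, etc.) and count graphs and monos directly, while the paper observes that every MG is the pushout of a span $G_1 \gets G_1\cap G_2 \to G_2$ and counts isomorphism classes of spans via subobjects of the product $G_1\times G_2$, invoking finite poweredness. Your concrete argument is more elementary and entirely adequate for the lemma as stated (which concerns directed multigraphs); the paper's buys generality over typed graphs, hypergraphs, and other adhesive settings. One caveat: your parenthetical sketches of an axiomatic version (epi--mono factorisation of a copairing, ``finitely many monos up to isomorphism'') do not match the paper's actual hypotheses -- it assumes neither coproducts nor image factorisations, and its finiteness axiom is finite poweredness applied to $G_1\times G_2$ -- so those asides would need reworking if you wanted the general statement, but they do not affect the correctness of your proof for $\Grph$.
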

See \Fig{bimotor-mgs} in \Example{} for an example of a set of MGs.

\section{Graph-based GREs}
\label{sec:diff-sem}

To derive the GRE for graphs (GREG) we follow the development in our
previous work~\cite{icfem,rc15} with the important difference that we
do not assume derivations to be reversible.
The key technical innovation that allows us to avoid the assumption of
reversibility is the backward modularity lemma (\Lem{bmod}).

As sketched in \Sec{discussion}, our GRE for graphs is defined in
terms of graph observables, which we now define formally.
Fix $S$ to be the countable (up to iso) set of finite graphs, and let
$F \in S$ be a graph.
The \emph{graph observable} $\obs{F} \colon S \to \NN$ is the
integer-valued function $\obs{F}(G) := \nmchs{F}{G}$ counting the
number of occurrences (\ie matches) of $F$ in a given graph $G$.
Graph observables are elements of the vector space $\RR^S$ of
real-valued functions on $S$.

The stochastic rate matrix $Q_\al$ for a rule $\al\colon L \rto R$
defined in~\eqref{eq:rates}
is a linear map on~$\RR^S$.
Its action on an observable $\obs{F}$ is given by
\begin{align}
  (Q_\al \obs{F})(G) &:= \tsum_H q^\al_{G H}(\obs{F}(G) - \obs{F}(H))
  &\text{for } G, H \in S.
  \label{eq:jump}
\end{align}
Since the sum above is finite, $Q_\al \obs{F}$ is indeed a
well-defined element of $\RR^S$.
We call $Q_\al \obs{F}$ the \emph{jump} of $\obs{F}$ relative to
$Q_\al$.
Intuitively, $(Q_\al \obs{F})(G)$ is the expected rate of change in
$\obs{F}$ given that the CTMC sits at $G$.

To obtain the GREG as sketched in \Sec{intro}, we want to express the
jump as a finite linear combination of graph observables.
We start by substituting the definition of $Q_\al$ in \eqref{eq:jump}.
\begin{align*}
  (Q_\al \obs{F}) (G) \;
  &= \; \tsum_{H} q^\al_{G H} (\obs{F}(H) - \obs{F}(G)) \\ \nonumber
  &= \; \tsum_{H}
     \tsum_{
f \in \mchs{L}{G} \text{ s.t. } \al(f) \in \mchs{R}{H}} \,
     (\nmchs{F}{H} - \nmchs{F}{G}) \\ \nonumber
  &= \; \tsum_{f \in \mchs{L}{G}}
     \left( \nmchs{F}{\cod(\al(f))} - \nmchs{F}{G} \right).
\end{align*}
where the simplification in the last step is justified by the fact
that $f$ and $\al$ uniquely determine $\al(f)$.
The last line suggests a decomposition of $Q_\al \obs{F}$ as
$Q_\al \obs{F} = Q^+_\al \obs{F} - Q^-_\al \obs{F}$, where $Q^+_\al$
produces new instances of $F$ while $Q^-_\al$ consumes existing ones.

By \Lem{mgs}, we can factor the action of the consumption term $Q^-_\al$ through the MGs
$\MGs{L}{F}$ of $L$ and $F$ to obtain
\[
  (Q^-_\al \obs{F})(G)
  \; = \;
\tsum_{f \in \mchs{L}{G}} \nmchs{F}{G}
  \; = \;
  \abs{\mchs{L}{G}} \cdot \abs{\mchs{F}{G}}
  \; = \;
  \sum_{\mu \in \MGs{L}{F}} \nmchs{\hat{\mu}}{G}.
\]
The resulting sum is a linear combination of a finite number of
graph observables, which is exactly what we are looking for.

Simplifying the production term requires a bit more work.
Applying the same factorisation \Lem{mgs}, we arrive at
\begin{align*}
  (Q^+_\al \obs{F})(G) \;
  &= \;
  \tsum_{f \in \mchs{L}{G}} \nmchs{F}{\hat{\al}(f)} \\
  &= \;
  \tsum_{f \in \mchs{L}{G}} \;
  \tsum_{(\mu_1, \mu_2) \in \MGs{R}{F}}
  \abs{\setof{g \in \mchs{\hat{\mu}}{\hat{\al}(f)}}{
    g \comp \mu_1 = \al(f)}}.
\end{align*}
where $\hat{\al}(f) = \cod(\al(f))$ denotes the codomain of the
comatch of $f$.
To simplify this expression further, we use the properties of
derivations introduced in \Sec{deriv-prop}.
First, we observe that $\mu_1$ must be derivable by $\al$ for
the set of $g$'s in the above expression to be nonempty.
\begin{lemma}\label{lem:restriction}
  Let $\al \colon L \rto R$ be a rule and $f \colon L \to G$,
  $g \colon R \to H$, $g_1 \colon R \to T$ matches such that
  $f \rwto[\al] g$, but $g_1$ is not derivable by~$\al$.
Then there is no match $g_2 \colon T \to H$ such that
  $g_2 \comp g_1 = g$.
\end{lemma}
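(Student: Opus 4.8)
The plan is to prove this by contradiction, with the backward modularity lemma (\Lem{bmod}) doing the heavy lifting. Suppose there were a match $g_2 \colon T \to H$ with $g_2 \comp g_1 = g$. The strategy is to transport the factorisation of the comatch $g$ backwards through $\al$ and thereby obtain a factorisation of the match $f$ whose upper half is a derivation of $g_1$ by $\al$, contradicting the assumed non-derivability of $g_1$.

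Concretely, I would instantiate \Lem{bmod} as follows. For diagram~\eqref{dg:bmod1}: the given derivation $f \rwto[\al] g$, read through the factorisation $g = g_2 \comp g_1$ with intermediate vertex $T = \cod(g_1)$, is exactly a diagram of that shape, its bottom corule being the corule of $f \rwto[\al] g$. For diagram~\eqref{dg:bmod2}: since $\rwto[\dg{\al}]$ is total, the match $g_1 \colon R \to T$ admits a derivation $g_1 \rwto[\dg{\al}] f_1$ with comatch $f_1 := \dg{\al}(g_1)$ and some corule $\dg{\gamma}$; putting $\gamma := \dg{(\dg{\gamma})}$ and using that $\dg{-}$ is an involution while comatches and corules of derivations are themselves matches and rules, all the data required by \Lem{bmod} is in place and of the right type. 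The lemma then yields a match $f_2$ making~\eqref{dg:bmod3} a vertical composite of derivations; in particular its \emph{top} square --- rule $\al$ on top, $\gamma$ below, $f_1$ on the left, $g_1$ on the right --- is itself a derivation, i.e.\ $f_1 \rwto[\al] g_1$. Hence $g_1$ is derivable by $\al$, witnessed by $f_1$, contradicting the hypothesis; so no such $g_2$ can exist.

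Almost everything here is bookkeeping: matching the available morphisms to the precise shapes of diagrams~\eqref{dg:bmod1}--\eqref{dg:bmod3} and checking the type constraints (that $f_1$ is a mono and $\gamma$ a rule) so that \Lem{bmod} applies. The single non-routine step --- and the point where the irreversibility of general derivations makes the statement look doubtful --- is to close the right-hand column downwards by the \emph{reverse} derivation $g_1 \rwto[\dg{\al}] f_1$ supplied by totality of $\rwto[\dg{\al}]$, and then to read derivability of $g_1$ directly off the top square of the induced decomposition. (Equivalently, one may appeal to \Lem{derivability}: that top square exhibits $g_1 \rwto[\dg{\al}] f_1$ as a reversible derivation, whence $g_1$ is $\al$-derivable.) The whole argument therefore reduces to the single application of backward modularity, and I do not anticipate any further obstacle.
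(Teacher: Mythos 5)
Your proposal is correct and follows essentially the same route as the paper: assume such a $g_2$ exists, feed the derivation $f \rwto[\al] g$ factored through $T$ together with a derivation $g_1 \rwto[\dg{\al}] f_1$ (obtained from totality of $\rwto[\dg{\al}]$) into the backward modularity \Lem{bmod}, and read off from the top square of the resulting decomposition a derivation $f_1 \rwto[\al] g_1$, contradicting the non-derivability of $g_1$. The paper states this in one line as the contrapositive of \Lem{bmod}; your write-up merely makes explicit the instantiation of diagrams~\eqref{dg:bmod1} and~\eqref{dg:bmod2} that the paper leaves implicit.
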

\begin{proof}
  By the contrapositive of backward modularity.
Any such $g_2$ would induce, by \Lem{bmod}, a match
  $f_1\colon L \to S$ and a derivation $f_1 \rwto[\al] g_1$.
\qed
\end{proof}
We may therefore restrict the set $\MGs{R}{F}$ of right-hand MGs under
consideration to the subset
$\RMGs{\al}{F} := \setof{ (\mu_1, \mu_2) \in \MGs{R}{F} }{\exists h. \, h \rwto[\al] \mu_1 }$
of MGs with a first projection derivable by $\al$.
Next, we observe that the modularity \Lem[Lemmas]{fmod} and
\ref{lem:bmod} establish a \emph{one-to-one correspondence} between
the set of factorisations of the comatches $\al(f)$ (through the MGs
in $\RMGs{\al}{F}$) and a set of factorisations of the corresponding
matches $f$.
\begin{lemma}[correspondence of matches]\label{lem:correspondence}
  Let $\al$, $\ba$, $\gamma$, $f$, $f_1$, $g$, $g_1$ such that diagrams~\eqref{dg:corr1} and \eqref{dg:corr2} are
  derivations and $g_1$ is derivable by $\al$.
Then the set
$M_L = \setof{f_2 \in \mchs{S}{G}}{f_2 \comp f_1 = f}$ is in
one-to-one correspondence with the set
$M_R = \{g_2 \in \mchs{T}{H} \st g_2 \comp g_1 = g\}$.
\begin{diags}
    \begin{tikzcd}
      L \dar[swap]{f}\rar[rule]{\al} & R \dar{g}\\
      G \rar[rule]{\beta}            & H
    \end{tikzcd}~\eqnlabel{dg:corr1}\hspace{3em}
    \begin{tikzcd}
      L \dar[swap]{f_1} & R \dar{g_1}\lar[elur, swap]{\; \dg{\al}}\\
      S                 & T \lar[elur, swap]{\; \dg{\gamma}}
    \end{tikzcd}~\eqnlabel{dg:corr2}
  \end{diags}
\end{lemma}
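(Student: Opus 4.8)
The plan is to build the bijection $M_L \leftrightarrow M_R$ directly, getting the forward map from \Lem{fmod} and the backward map from \Lem{bmod}, and then to check that the two maps are mutually inverse by invoking the uniqueness clauses of those lemmas. The first step is to fix the common data. Since $g_1$ is derivable by $\al$, \Lem{derivability} applied to \eqref{dg:corr2} --- read as the derivation $g_1 \rwto[\dg{\al}] f_1$ with corule $\dg{\gamma}$ --- shows that this derivation is reversible, so $f_1 \rwto[\al] g_1$ holds with corule $\gamma$, i.e.\ \eqref{dg:fmod2} is a derivation. Call a \emph{compatible decomposition} any diagram of the shape \eqref{dg:fmod3} (equivalently \eqref{dg:bmod3}) whose upper square is this derivation $f_1 \rwto[\al] g_1$ (corule $\gamma$), whose lower square is a derivation $f_2 \rwto[\gamma] g_2$ (corule $\ba$), and which satisfies $f_2 \comp f_1 = f$ and $g_2 \comp g_1 = g$; by the Pushout Lemma its outer rectangle is then forced to coincide with the derivation \eqref{dg:corr1}. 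Each compatible decomposition determines an element $f_2 \in M_L$ and an element $g_2 \in M_R$.

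Next I would build the two maps on these data. Given $f_2 \in M_L$, the equation $f_2 \comp f_1 = f$ together with \eqref{dg:corr1} supplies hypothesis \eqref{dg:fmod1}, and the derivation obtained above supplies \eqref{dg:fmod2}; \Lem{fmod} then yields a \emph{unique} $g_2$ extending $f_2$ to a compatible decomposition, and I set $\Phi(f_2) := g_2 \in M_R$. Symmetrically, given $g_2 \in M_R$, the equation $g_2 \comp g_1 = g$ together with \eqref{dg:corr1} supplies hypothesis \eqref{dg:bmod1}, while \eqref{dg:corr2} is literally \eqref{dg:bmod2}; \Lem{bmod} then yields a \emph{unique} $f_2$ extending $g_2$ to a compatible decomposition, and I set $\Psi(g_2) := f_2 \in M_L$.

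Finally I would verify that $\Psi \comp \Phi = \id_{M_L}$ and $\Phi \comp \Psi = \id_{M_R}$. For $f_2 \in M_L$, the compatible decomposition witnessing $\Phi(f_2) = g_2$ is, in particular, a diagram of the form \eqref{dg:bmod3} meeting all the requirements in \Lem{bmod}; since $\Psi(g_2)$ is by definition the unique match with that property, $\Psi(\Phi(f_2)) = f_2$. The identity $\Phi \comp \Psi = \id_{M_R}$ follows in the same way from the uniqueness clause of \Lem{fmod}. That establishes the claimed one-to-one correspondence.

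I expect the only delicate point to be the opening reduction: one must spend the hypothesis ``$g_1$ derivable by $\al$'', via \Lem{derivability}, to turn the reversed derivation \eqref{dg:corr2} into the forward derivation that \Lem{fmod} requires, and one must notice that forward and backward modularity produce the \emph{same} diagram --- a compatible decomposition --- so that their uniqueness statements hand over the inverse property for free. The real difficulty, transporting a factorisation of a comatch backwards along a possibly irreversible rule, is already packaged in \Lem{bmod}; granting that, the present lemma is essentially a formal consequence.
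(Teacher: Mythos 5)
Your proposal is correct and follows essentially the same route as the paper's proof: you use \Lem{derivability} to turn \eqref{dg:corr2} into the forward derivation $f_1 \rwto[\al] g_1$, then forward modularity (\Lem{fmod}) to define the map $M_L \to M_R$ and backward modularity (\Lem{bmod}) to invert it. The only difference is presentational --- you build an explicit inverse $\Psi$ and check the composites via the uniqueness clauses, whereas the paper argues injectivity and surjectivity of the single map $\phi$ directly from \Lem{bmod}; the mathematical content is the same.
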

\begin{proof}Since $g_1$ is $\al$-derivable, the diagram~\eqref{dg:corr2} is
  reversible, that is, $f_1 \rwto[\al] g_1$, with corule $\gamma$
  (by \Lem{derivability}).
Hence, if we are given a match $f_2$ in $M_L$, we can
  forward-decompose \eqref{dg:corr1} vertically along the
  factorisation $f_2 \comp f_1 = f$, resulting in the diagram below
  (by forward modularity, \Lem{fmod}).
\begin{narrow}{8em}
    Furthermore, the comatch $g_2$ is unique with respect to this
    decomposition, thus defining a function $\phi \colon M_L \to M_R$
    that maps any $f_2$ in $M_L$ to the corresponding comatch
    $\phi(f_2) = g_2$ in $M_R$.
We want to show that $\phi$ is a bijection.  By backward
    modularity (\Lem{bmod}), there is a match $f_2 \in M_L$ for any
    match $g_2 \in M_R$ such that $\phi(f_2) = g_2$ (surjectivity), and
    furthermore, $f_2$ is the unique match for which $\phi(f_2) = g_2$
    (injectivity).
\qed
  \end{narrow}
\hfill\rightoverlay{10em}{\begin{tikzcd}[ampersand replacement=\&]
      L \dar[swap]{f_1}\rar[rule]{\al} \&
      R \dar[swap]{g_1}\arrow[bend left=40]{dd}{g}\\
      S \dar[swap]{f_2}\rar[rule]{\gamma} \& T \dar[swap, dashed]{g_2}\\
      G \rar[rule]{\beta} \& H
    \end{tikzcd}}
\vspace{-\baselineskip}
\end{proof}
Using \Lem[Lemmas]{restriction} and \ref{lem:correspondence},
we can simplify $Q^+_\al$ as follows:
\begin{align*}
  (Q^+_\al \obs{F})(G) \;
  &= \sum_{f \in \mchs{L}{G}} \;
     \sum_{\mu \in \RMGs{\al}{F}}
     \abs{\setof{g_2 \in \, \mchs{\hat{\mu}}{\hat{\al}(f)}}{
       g_2 \comp \mu_1 = \al(f)}} \\
  &= \sum_{\mu \in \RMGs{\al}{F}} \, \sum_{f \in \mchs{L}{G}}
     \abs{\setof{f_2 \in \mchs{\dg{\hat{\al}}(\mu_1)}{G}}{
       f_2 \comp \dg{\al}(\mu_1) = f}} \\
  &= \sum_{\mu \in \RMGs{\al}{F}} \nmchs{\dg{\hat{\al}}(\mu_1)}{G}
\end{align*}
If we set $\LMGs{\al}{F} := \MGs{L}{F}$ to symmetrise notation, we obtain
\begin{align}
  Q_\al(\obs{F})
  &=
  \tsum_{\mu \in \RMGs{\al}{F}} \obs{\dg{\hat{\al}}(\mu_1)} -
  \tsum_{\mu \in \LMGs{\al}{F}} \obs{\hat{\mu}}
  \label{eq:reg}
\end{align}

Now, in general for a CTMC on a countable state space $S$, the Markov-generated and time-dependent probability $p$ on $S$ follows the master equation~\cite{Norris98mc,anderson2012continuous}:
$\ddt p^T = p^T Q$. Given
an abstract observable $f$ in $\RR^S$,  and writing $\exp_p(f) := p^T f$ for the expected
value of $f$ according to $p$,
we can then derive the formal\footnote{In the present paper, we elide the subtle issues of ensuring that the system of interest actually satisfies this equation.
See the work of Spieksma~\cite{spieksma} for the underlying mathematics or our previous work \cite{Tobiasfossacs},
which additionally considers computability of the solutions to arbitrary precision.} Kolmogorov equation for
$f$:
\[
  \ddt \exp_p(f)
  \; = \;
  \ddt p^T f
  \; = \;
  p^T Q f
  \; = \;
  \exp_p (Q f),
  \label{eq:RE1}
\]
giving us an equation for the rate of change of the mean of $f(X(t))$.
Following this general recipe gives us the GRE for graphs immediately
from~\eqref{eq:reg}.
\begin{align}
  \dotavg{(\obs{F})} = &- \sum_{\al \in \rset} k(\al) \!\!\! \sum_{\mu \in \LMGs{\al}{F}}
       \!\!\! \avg{\obs{\hat{\mu}}}
   + \sum_{\al \in \rset} k(\al) \!\!\! \sum_{\mu \in \RMGs{\al}{F}}
       \!\!\! \avg{\obs{\dg{\hat{\al}}(\mu_1)}}.
  \label{eq:GREG}
\end{align}
Remember that $\mu_1$ denotes the left injection of the MG
$\mu = (\mu_1, \mu_2)$ while $\hat{\mu}$ denotes its codomain, and
that $\dg{\hat{\al}}(f) = \cod(\dg{\al}(f))$.

Unsurprisingly, the derivation of \eqref{eq:GREG} was more technically
challenging than that of the GRE for reversible graph rewrite systems
(\cf~\cite[Theorem~2]{rc15}).
Yet the resulting GREs look almost identical (\cf~\cite[Eq.~(7)]{rc15}).
The crucial difference is in the production term $Q^+_\al$, where we
no longer sum over the full set of right-hand MGs $\MGs{R}{F}$ but
only over the subset $\RMGs{\al}{F}$ of MGs that are $\al$-derivable.
This extra condition is the price we pay for dealing with
irreversibility:
irreversible rules can consume all MGs, but only produce some.

Note that the number of terms in \eqref{eq:GREG} depends on the size
of the relevant sets of left and right-hand MGs, which is worst-case
exponential in the size of the graphs involved, due to the
combinatorial nature of MGs.
(See \Fig{bimotor-mgs} in \Example{} for an example.)
In practice, one often finds many pairs of \emph{irrelevant} MGs, the
terms of which cancel out exactly.
This reduces the effective size of the equations but not the overall
complexity of generating the GREG.

Finally, as said in \Sec{discussion}, the repeated application of
\eqref{eq:GREG} will lead to an infinite expansion in general.
In practice, the system of ODEs needs to be truncated.
For certain models, one can identify invariants in the underlying
rewrite system via static analysis, which result in a finite closure
even though the set of reachable components is demonstrably
infinite~\cite{DanosHJS12sasb}.
We have seen an example in \Sec{intro}.

\section{Conclusion}
\label{sec:conclusions}

We have developed a computer-supported method for mean field approximations (MFA)
for stochastic systems with graph-like states that are described by rules of SPO rewriting.
The underlying theory unifies a large and seemingly unstructured collection of MFA approaches
which share a graphical ``air de famille''.
Based on the categorical frameworks of graph
transformation systems (GTS), we have developed MFA-specific techniques,
in particular concerning the combinatorics of minimal gluings.
The main technical hurdle consisted in showing that
the set of subgraph observables is closed under the action of the rate
matrix (a.k.a.\@ the infinitesimal generator) of the continuous-time Markov
chain generated by an irreversible GTS. The proof is constructive and gives us an
explicit term for the derivative of the mean of any observable of interest.

Mean field approximation and moment-closure methods are of wide use in applications,
as typical probabilistic systems tend to have state spaces which defy
more direct approaches. To reach their full potential, MFAs need
to be combined with reachability and invariant analysis
(as illustrated in \Example).

We have worked the construction at the general axiomatic level of
SPO-rewriting with matches and rules restricted to monomorphisms. One
interesting extension is to include {nested application conditions}
(NACs)~\cite{HabelP09mscs,Rensink04gg} where the
application of a rule can be modulated locally by the context of the
match. NACs are useful in practice, and bring aboard the expressive
power of first order logic in the description of transformation
rules. We plan to investigate the extension of our approach to NACs,
and, in particular, whether it is possible to incorporate them
axiomatically, and what additional complexity cost they might incur.

Another direction of future work is to improve on the method of truncation.
In the literature, one often finds graphical MFAs used in combination
with conditional independence assumptions to control the size of connected
observables, as \eg the so-called pair approximation~\cite{durrett2012graph,gleeson}. As these methods are known to improve the accuracy of naive truncation,
we wish to understand if and how they can be brought inside our formal approach.

\bibliographystyle{splncs04}

\appendix

\section{Pushout and pull-back complements}
\label{app:defs}

Algebraic graph rewriting relies on certain
category-theoretical \emph{limits} and \emph{colimits}~\cite{BarrW95catcs}.  We give
definitions of the relevant (co-)\allowbreak limits here along with
some of their basic properties.  Among these, pullback
complements are the least known. We refer the
interested reader to Ref.~\cite{DyckhoffT87jpaa,CorradiniHHK06icgt} for a
thorough treatment.
\begin{diags}
  \begin{tikzcd}
    Q \arrow[bend right]{ddr}[swap]{g_2}\arrow[bend left]{drr}{g_1}
    \drar[dashed]{u} &[-1ex]\\[-1ex]
    & P \rar{p_1}\dar[swap]{p_2} & Y \dar{f_2}\\
    & X \rar[swap]{f_1} & Z
  \end{tikzcd}~\eqnlabel{dg:pb}\hspace{2ex}
  \begin{tikzcd}
    Z \rar{f_1}\dar[swap]{f_2} & X \dar{i_2}\arrow[bend left]{ddr}{g_2}\\
    Y \rar[swap]{i_1}\arrow[bend right]{drr}[swap]{g_1} &
    P \drar[dashed]{u} &[-1ex]\\[-1ex]
    & & Q
  \end{tikzcd}~\eqnlabel{dg:po}
\end{diags}

Let $\cat{C}$ be a category.
\begin{definition}[Pullback]\label{def:pb}
  A \emph{pullback} of a cospan of morphisms $X \to[f_1] Z \gets[f_2]
  X$ in $\cat{C}$ is a span $X \gets[p_1] P \to[p_2] Y$
  making the bottom-right square in~\eqref{dg:pb} commute, and such that
  for any other span $X \gets[g_1] Q \to[g_2] Y$ for which the outer
  square commutes, there is a unique morphism $u \colon Q \to P$
  making the diagram commute.
\end{definition}

\begin{definition}[Pushout]\label{def:po}
  A \emph{pushout} of a span of morphisms $X \to[f_1] Z \gets[f_2] Y$
  in $\cat{C}$ is a cospan $X \to[i_1] P \gets[i_2] Y$
  making the top-left square in~\eqref{dg:po} commute, and such that for
  any other cospan $X \to[g_1] Q \gets[g_2] Y$ for which the outer
  square commutes, there is a unique morphism $u \colon P \to Q$
  making the diagram commute.
\end{definition}

\begin{diags}
  \begin{tikzcd}
    P\arrow[bend left=20, near end]{drr}{f_1'}\arrow{ddd}[swap]{g_1'}
    \drar[swap]{p}\\[-3ex]
    & X \rar[swap]{f_1}\dar[swap]{g_1} & Y \dar{f_2}\\
    & W \rar{g_2} & Z\\[-3ex]
    Q \urar[dashed]{u}\arrow[bend right=20, near end]{urr}[swap]{g_2'}
  \end{tikzcd}~\eqnlabel{dg:fpbc}
\end{diags}

\begin{definition}[Final pullback complement]\label{def:fpbc}
  A \emph{final pullback complement (FPBC)} (or simply \emph{pullback
    complement}) of a pair of composable morphisms $X \to[f_1] Y
  \to[f_2] Z$ in some category $\cat{C}$ is a pair of composable
  morphisms $X \to[g_1] W \to[g_2] Z$ making the right inner square
  in~\eqref{dg:fpbc} a pullback, such that for any other pullback $P
  \to[f_1'] Y \to[f_2] Z \gets[g_2'] Q \gets[g_1'] P$ and morphism $p
  \colon P \to X$ for which the diagram commutes, there is a unique
  morphism $u \colon Q \to W$ that makes the diagram commute.
\end{definition}

The following lemmas, pertaining to the composition of pullbacks,
pushouts and FPBCs, respectively, are used throughout the proofs in
\App{proofs}.  The first two are dual versions of the well-known
``pasting'' lemma for pullbacks and pushouts, and we leave their
proofs as an exercise to the reader.  A proof of the third lemma can
be found in~\cite[Proposition~5]{Lowe10gg}.
\begin{diags}
  \begin{tikzcd}[row sep=5ex, column sep=6ex]
    A \dar[swap]{g_1}\rar{f_1} & B \dar{g_2}\rar{f_2} & C \dar{g_3}\\
    D \rar[swap]{h_1} & E \rar[swap]{h_2} & F
  \end{tikzcd}\hspace{2em}\eqnlabel{dg:pasting}
\end{diags}
\begin{lemma}[Pasting of pullbacks]
  Suppose the right inner square in~\eqref{dg:pasting} is a pullback
  in some category~$\cat{C}$.  Then the left inner square is a
  pullback if and only if the outer square is.
\end{lemma}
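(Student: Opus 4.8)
The plan is to prove both implications by elementary diagram chases using only the universal properties of the two squares, with notation as in~\eqref{dg:pasting} and the standing fact that both inner squares, hence the outer rectangle, commute. Throughout, the right inner square (on $f_2$, $g_2$, $g_3$, $h_2$) is a pullback by hypothesis.

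For the forward direction I would assume the left inner square is a pullback and take a test cone for the outer rectangle: an object $T$ with $a \colon T \to C$ and $d \colon T \to D$ such that $g_3 \comp a = h_2 \comp h_1 \comp d$. First feed $a$ and $h_1 \comp d$ into the right pullback (the required equality $g_3 \comp a = h_2 \comp(h_1 \comp d)$ holds), obtaining a unique $b \colon T \to B$ with $f_2 \comp b = a$ and $g_2 \comp b = h_1 \comp d$. Then feed $b$ and $d$ into the left pullback, which applies precisely because $g_2 \comp b = h_1 \comp d$, obtaining a unique $u \colon T \to A$ with $f_1 \comp u = b$ and $g_1 \comp u = d$; this $u$ mediates the outer cone since $f_2 \comp f_1 \comp u = f_2 \comp b = a$. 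For uniqueness, any competing mediator $u'$ has $f_1 \comp u'$ satisfying $f_2 \comp(f_1 \comp u') = a$ and $g_2 \comp(f_1 \comp u') = h_1 \comp g_1 \comp u' = h_1 \comp d$, so $f_1 \comp u' = b$ by uniqueness in the right square, and then $u' = u$ by uniqueness in the left square.

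For the converse I would assume the outer rectangle is a pullback and take a test cone for the left square: $b \colon T \to B$, $d \colon T \to D$ with $g_2 \comp b = h_1 \comp d$. Using commutativity of the right square, $g_3 \comp(f_2 \comp b) = h_2 \comp g_2 \comp b = h_2 \comp h_1 \comp d$, so $(f_2 \comp b, d)$ is a cone for the outer rectangle and yields a unique $u \colon T \to A$ with $f_2 \comp f_1 \comp u = f_2 \comp b$ and $g_1 \comp u = d$. Uniqueness of $u$ as a left-square mediator is then immediate from uniqueness in the outer pullback, so the only thing left to check is $f_1 \comp u = b$.

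I expect this last point to be the one genuinely load-bearing step, and it is exactly where the right square's being a pullback is used (not merely that the outer rectangle is): both $f_1 \comp u$ and $b$ are maps $T \to B$ whose composites with $f_2$ agree ($= f_2 \comp b$) and whose composites with $g_2$ agree, since $g_2 \comp f_1 \comp u = h_1 \comp g_1 \comp u = h_1 \comp d = g_2 \comp b$; hence they are equal by the uniqueness clause of the right pullback applied to the cone $(f_2 \comp b, g_2 \comp b)$. Finally I would note that nothing about $\cat{C}$ beyond the stated universal properties is used, and that dualising (reversing all arrows) gives the analogous pasting statement for pushouts, matching the remark made after~\eqref{dg:pasting}.
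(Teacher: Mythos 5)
Your proof is correct, and it is the standard diagram chase for the pullback pasting lemma; the paper itself states this lemma without proof (explicitly leaving it as an exercise), so there is no alternative argument to compare against. Both directions are handled properly, and you correctly isolate the load-bearing step in the converse — namely that $f_1 \comp u = b$ follows from the uniqueness clause of the \emph{right} pullback, which is exactly why the hypothesis on the right square cannot be dropped.
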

\begin{lemma}[Pasting of pushouts]
  Suppose the left inner square in~\eqref{dg:pasting} is a pushout in
  some category~$\cat{C}$.  Then the right inner square is a pushout
  if and only if the outer square is.
\end{lemma}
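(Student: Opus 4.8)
The plan is to deduce this lemma from the pasting lemma for pullbacks (already assumed) by passing to the opposite category, with a direct diagram chase as a self-contained alternative.

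First I would set up the dualisation carefully. Reversing every arrow in~\eqref{dg:pasting} and relabelling produces a diagram of the same shape in which the left inner square of~\eqref{dg:pasting} has become the \emph{right} inner square, the right inner square has become the \emph{left} one, and the outer square is unchanged; moreover a square is a pushout in $\cat{C}$ exactly when the reversed square is a pullback in $\cat{C}^{\mathrm{op}}$. Under this translation the hypothesis ``the left inner square is a pushout in $\cat{C}$'' becomes precisely the hypothesis ``the right inner square is a pullback'' of the pasting lemma for pullbacks, and the conclusion of that lemma, ``the left inner square is a pullback iff the outer square is'', reads back in $\cat{C}$ as ``the right inner square is a pushout iff the outer square is''. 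That is the claim.

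For a self-contained proof, one runs the standard two-step chase. In the ``only if'' direction (both inner squares are pushouts), given $p\colon C\to Q$ and $q\colon D\to Q$ with $p\comp f_2\comp f_1 = q\comp g_1$, feed the cocone $(p\comp f_2,\, q)$ on the span $B\gets A\to D$ into the left pushout to get a unique $r\colon E\to Q$ with $r\comp g_2 = p\comp f_2$ and $r\comp h_1 = q$, then feed $(p,r)$ into the right pushout to obtain $u\colon F\to Q$ with $u\comp g_3 = p$ and $u\comp h_2 = r$; one checks $u\comp h_2\comp h_1 = q$ and reads off uniqueness from the two pushouts. The ``if'' direction (the left inner square and the outer square are pushouts) is the mirror image: the outer pushout produces a candidate $u\colon F\to Q$, and the uniqueness of mediating morphisms out of the \emph{left} pushout forces $u\comp h_2$ to coincide with the map prescribed by the would-be right-square cocone, which yields the universal property of the right square.

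I do not expect a genuine obstacle: this is the classical pasting lemma, flagged as routine in the text. The only care needed is in the bookkeeping --- which inner square maps to which under dualisation, respectively which pushout to invoke at each stage of the chase --- and in checking that the intermediate cocones are compatible over the appropriate spans; all of this is mechanical.
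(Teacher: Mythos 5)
Your proof is correct. Note that the paper itself gives no proof here: both pasting lemmas are stated and explicitly left ``as an exercise to the reader'', so there is nothing to diverge from. Your dualisation is set up with the right bookkeeping (in $\cat{C}^{\mathrm{op}}$ the two inner squares of~\eqref{dg:pasting} swap roles and pushouts become pullbacks, so the pullback pasting lemma transfers verbatim), and your direct chase is the standard argument: in the ``if'' direction the only point worth spelling out is that $u \comp h_2$ and the given cocone leg $E \to Q$ are both mediating morphisms out of the \emph{left} pushout for the same cocone, hence equal, and that uniqueness of the mediating morphism for the right square then follows from uniqueness for the outer pushout --- which is exactly the mechanical check you flag. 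Either route alone would suffice; since the paper also leaves the pullback version unproved, your self-contained chase is the more complete of the two.
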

\begin{lemma}[Composition of FPBCs]
  Consider again diagram~\eqref{dg:pasting} in some
  category~$\cat{C}$,
  \begin{itemize}
  \item (horizontal composition) if $A \to[g_1] D \to[h_1] E$ and $B
    \to[g_2] E \to[h_2] F$ are the FPBCs of $A \to[f_1] B \to[g_2] E$
    and $B \to[f_2] C \to[g_3] F$, respectively, then $A \to[g_1] D
    \to[h_2 \comp h_1] F$ is the FPBC of $A \to[f_2 \comp f_1] C
    \to[g_3] F$;
  \item (vertical composition) if $A \to[f_1] B \to[g_2] E$ and $B
    \to[f_2] C \to[g_3] F$ are the FPBCs of $A \to[g_1] D \to[h_1] E$
    and $B \to[g_2] E \to[h_2] F$, respectively, then $A \to[f_2 \comp
    f_1] C \to[g_3] F$ is the FPBC of $A \to[g_1] D \to[h_2 \comp h_1]
    F$.
  \end{itemize}
\end{lemma}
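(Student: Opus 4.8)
The plan is to obtain both halves from just two ingredients: the pasting lemma for pullbacks, and the fact that every FPBC is in particular a pullback (\Def{fpbc}). I sketch the horizontal statement; the vertical one is dual. The claim that the \emph{outer} square of~\eqref{dg:pasting} is a pullback is immediate: both inner squares are pullbacks, and, taking the right inner square as the square already known to be a pullback, the pasting lemma transfers the pullback property of the left inner square to the outer square. This step is common to both halves.

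For the finality half of the horizontal case, suppose we are given a test for the FPBC of $A \to[f_2 \comp f_1] C \to[g_3] F$: a morphism $c\colon Q \to F$, the pullback $(P,a,b)$ of $g_3$ and $c$ (with $a\colon P \to C$, $b\colon P \to Q$), and a morphism $p\colon P \to A$ with $(f_2 \comp f_1)\comp p = a$. We must produce a unique $u\colon Q \to D$ with $(h_2 \comp h_1)\comp u = c$ and $u \comp b = g_1 \comp p$. The plan is to peel off the two FPBCs one at a time. First, since $f_2 \comp (f_1 \comp p) = a$, the pullback $(P,a,b)$ together with the morphism $f_1 \comp p\colon P \to B$ constitutes a test for the finality of the right inner square (the FPBC of $B \to[f_2] C \to[g_3] F$), which therefore yields a unique $v\colon Q \to E$ with $h_2 \comp v = c$ and $v \comp b = g_2 \comp f_1 \comp p$. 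The crucial point is then that $(P, f_1 \comp p, b)$ is itself a pullback --- of $g_2$ and $v$: composing this square with the (pullback) right inner square of~\eqref{dg:pasting} reproduces the pullback $(P,a,b)$ of $g_3$ and $c$, so the pasting lemma forces $(P, f_1 \comp p, b)$ to be a pullback. Consequently it, together with $p$, is a test for the finality of the left inner square (the FPBC of $A \to[f_1] B \to[g_2] E$), which delivers a unique $u\colon Q \to D$ with $h_1 \comp u = v$ and $u \comp b = g_1 \comp p$; and then $(h_2 \comp h_1)\comp u = h_2 \comp v = c$, as required.

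For uniqueness, given any $u'\colon Q \to D$ with $(h_2 \comp h_1)\comp u' = c$ and $u' \comp b = g_1 \comp p$, the morphism $h_1 \comp u'$ satisfies $h_2 \comp (h_1 \comp u') = c$ and $(h_1 \comp u')\comp b = g_2 \comp f_1 \comp p$ (the second using that the left square of~\eqref{dg:pasting} commutes), which are exactly the conditions that pin down $v$; hence $h_1 \comp u' = v$, and then $u' = u$ by the uniqueness clause for the left inner square. This completes the horizontal case.

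The vertical case is dual: the roles of the two FPBCs are exchanged, and in place of composing a new square onto the given test pullback one \emph{splits} that pullback --- form the pullback $(P_1, d_1, e_1)$ of $h_2$ and $c$, factor the given pullback of $h_2 \comp h_1$ and $c$ through it by the pasting lemma (used in the other direction), then invoke finality of the left inner square to obtain a morphism $P_1 \to B$ and finality of the right inner square to obtain the desired $u\colon Q \to C$; uniqueness reconstructs the intermediate morphism from any rival $u'$ using the fact that the projections of the right inner (pullback) square are jointly monic. I expect the only real obstacle to be precisely this bookkeeping --- arranging, in each half, the correct auxiliary pullback square to feed into each use of the FPBC universal property; everything else is a routine diagram chase. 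A fully worked-out proof is~\cite[Proposition~5]{Lowe10gg}.
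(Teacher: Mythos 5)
Your argument is correct, but it takes a different route from the paper, which does not prove this lemma at all: it simply defers to L\"owe~\cite[Proposition~5]{Lowe10gg}. What you give instead is a self-contained proof using only the pasting lemma for pullbacks and the finality clause of \Def{fpbc}, and your horizontal case is complete and right: the key move of recognising that $(P, f_1 \comp p, b)$ is forced to be a pullback of $g_2$ and $v$ (by pasting against the right FPBC square) is exactly what makes the second application of finality legitimate, and your uniqueness chase via $h_1 \comp u'$ is sound. Two remarks on the vertical case, which you only sketch. First, it is not literally ``dual'' (FPBCs are not a self-dual notion); as your own sketch shows, the argument has a genuinely different shape: one must \emph{construct} the auxiliary pullback $(P_1,d_1,e_1)$ of $h_2$ and $c$, split the test pullback through it by pasting, and then apply the two finality clauses in the opposite order, with uniqueness recovered by inducing a rival mediating morphism $v'\colon P_1 \to B$ from the pullback property of the square $(B; g_2, f_2)$ and using that its projections are jointly monic to get $v' \comp w = f_1 \comp p$, hence $v'=v$ and $u'=u$ --- your sketch names precisely these ingredients, so it is correct in outline. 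Second, note that this half silently assumes the pullback of $h_2$ along $c$ exists; this is harmless where the lemma is used (the ambient category $\Gs$ is adhesive, hence has all pullbacks), but it is an extra hypothesis relative to the statement's bare ``in some category $\cat{C}$'', and is worth making explicit. In short: the paper buys brevity by citation; your proof buys self-containedness at the cost of spelling out the asymmetric vertical argument, which you have only outlined but outlined correctly.
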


\section{Generalised proofs of lemmas}
\label{app:proofs}

This section contains detailed proofs of the various lemmas introduced
in previous sections.  We will present the proofs in a slightly more
general setting, namely that of \emph{sesqui-pushout (SqPO)
  rewriting}~\cite{CorradiniHHK06icgt} in arbitrary \emph{adhesive
  categories}~\cite{LackS05ita}.  To be precise, we assume an ambient
category $\Gs$, such that
\begin{itemize}
\item $\Gs$ is adhesive (among other things, this implies that $\Gs$
  has all pullbacks as well as all pushouts along monomorphisms, that
  monomorphism are stable under pushout, and that all such pushouts
  are also pullbacks, cf.~\cite{LackS05ita}),
\item $\Gs$ has all final pullback complements (FPBCs) above
  monomorphisms.
\end{itemize}
Both these assumptions hold in $\Grph$.  Within $\Gs$, we define
derivations as in \Def{deriv}, taking matches and rules to be
monomorphisms and spans thereof, respectively.

Alternatively, rules can be seen as partial maps~\cite{RobinsonR88ic}
in the category $\GsP$, generalising the interpretation of rules as
partial graph morphisms in $\GrphP$.  Derivations can then be shown to
correspond exactly to pushouts of rules along monomorphisms in
$\GsP$~\cite[Proposition 2.10]{BaldanCHKS14mscs}, and composition of
derivations corresponds to pushout composition in $\GsP$.

\subsection{Proof of \Lem{mgs} (minimal gluings)}
\label{app:mgs}

Let $G_1$ and $G_2$ be graphs, then
\begin{enumerate}
\item the set $\MGs{G_1}{G_2}$ of MGs of $G_1$ and $G_1$ is
  finite, and
\item for every cospan $G_1 \to[f_1] H \gets[f_2] G_2$ of matches,
  there is a unique MG $(G_1 \to[i_1] U \gets[i_2] G_2)
  \in \MGs{G_1}{G_2}$ and match $u \colon U \to H$ such that $f_1 = u
  \comp i_1$ and $f_2 = u \comp i_2$.
\end{enumerate}

\begin{proof}\label{prf:mgs}
  For this proof we will make two additional assumptions on $\Gs$,
  namely that $\Gs$ has all binary products, and that the objects of
  $\Gs$ are finitely powered, that is, any object $A$ in $\Gs$ has a
  finite number of subobjects.  Both these assumptions hold in
  $\Grph$.

Recall that the subobjects of any object $A$ in $\Gs$ form a poset
category $\Sub(A)$ with \emph{subobject intersections} as products and
\emph{subobject unions} as coproducts.  By stability of monomorphisms
under pullback, products (intersections) in $\Sub(A)$ are given by
pullbacks in $\Gs$, and since $\Gs$ is adhesive, coproducts (unions)
in $\Sub(A)$ are given by pushouts of pullbacks in $\Gs$.
See~\cite[Theorem~5.1]{LackS05ita} for more details.
\begin{diags}
  \begin{tikzcd}[column sep=2ex, row sep=2.5ex]
    {}& G_1 \cap G_2 \dlar[swap]{p_1}\drar{p_2}\\
    G_1 \drar[swap]{i_1}\arrow[bend right]{ddr}[swap]{f_1} & &
    G_2 \dlar{i_2}\arrow[bend left]{ddl}{f_2}\\
    & G_1 \cup G_2 \dar[dashed]{u}\\[1ex]
    & H
  \end{tikzcd}\hspace{2em}\eqnlabel{dg:mgs1}
\end{diags}
We will start by showing that any cospan $G_1 \to[f_1] H \gets[f_2]
G_2$ of matches in $\Gs$ factorises uniquely through an element of
$\MGs{G_1}{G_2}$.  Given such a cospan, let $u \colon G_1 \cup G_2 \to
H$ be a representative in $\Gs$ of the subject union of $f_1$ and
$f_2$ in $\Sub(H)$, with coproduct injections $i_1 \colon G_1 \to G_1
\cup G_2$ and $i_2 \colon G_2 \to G_1 \cup G_2$ as in~\eqref{dg:mgs1}.
Since $u$ is the mediating morphism of a pullback, it is unique up to
isomorphism of $G_1 \cup G_2$.  It remains to show that $G_1 \to[i_1]
G_1 \cup G_2 \gets[i_2] G_2$ is a MG.  By adhesiveness of
$\Gs$, the pushout square at the top of~\eqref{dg:mgs1} is also a
pullback, and hence an intersection of $i_1$ and $i_2$ in $\Sub(G_1
\cup G_2)$.  It follows that $\id_{G_1 \cup G_2}$ represents the
subobject union of $i_2$ and $i_2$ in $\Sub(G_1 \cup G_2)$ and hence
$G_1 \to[i_1] G_1 \cup G_2 \gets[i_2] G_2$ is indeed a MG.

The finiteness of $\MGs{G_1}{G_2}$ follows from a similar argument.
First, note that $\abs{\MGs{G_1}{G_2}} = \abs{\MGCls{G_1}{G_2}}$,
so it is sufficient to show that $\MGCls{G_1}{G_2}$ is finite.
Being a subobject union, every MG is the pushout of a span
$G_1 \gets[p_1] G_1 \cap G_2 \to[p_2] G_2$ of matches as
in~\eqref{dg:mgs1}.  Since isomorphic spans have isomorphic pushouts,
there can be at most as many isomorphism classes of MGs of
$G_1$ and $G_2$ as there are isomorphism classes of spans over $G_1$
and $G_2$.  Furthermore, the spans $G_1 \gets[p_1] X \to[p_2] G_2$ are
in one-to-one correspondence with the pairings $\pair{p_1}{p_2} \colon
X \to G_1 \times G_2$ in $\Gs$, which represent subobjects in
$\Sub(G_1 \times G_2)$ (with isomorphic spans corresponding to
identical subobjects).  Since $G_1 \times G_2$ is finitely powered,
there are only a finite number of such subobjects, and hence there can
only be a finite number of isomorphism classes of spans over $G_1$ and
$G_2$, which concludes the proof.
\qed
\end{proof}

\subsection{Proof of \Lem{fmod} (forward modularity)}
\label{app:fmod}

Let $f_1$, $f_2$, $g$, $g_1$ be matches, and $\al$, $\ba$, $\gamma$
rules, such that the diagrams~\eqref{dg:fmod1} and \eqref{dg:fmod2}
are derivations.  Then there is a unique match $g_2$, such that
diagram~\eqref{dg:fmod3} commutes and is a vertical composition of
derivations.
\begin{diags}
  \begin{tikzcd}
    L \dar[swap]{f_1}\rar[rule]{\al} & R \arrow{dd}{g}\\
    S \dar[swap]{f_2}\\
    G \rar[rule]{\beta}           & H
  \end{tikzcd}~\eqref{dg:fmod1}\hspace{1em}
  \begin{tikzcd}
    L \dar[swap]{f_1}\rar[rule]{\al} & L \dar{g_1}\\
    S \rar[rule]{\gamma}                & T
  \end{tikzcd}~\eqref{dg:fmod2}\hspace{1em}
  \begin{tikzcd}
    L \dar[swap]{f_1}\rar[rule]{\al} &
    R \dar[swap]{g_1}\arrow[bend left=40]{dd}{g}\\
    S \dar[swap]{f_2}\rar[rule]{\gamma} & T \dar[swap, dashed]{g_2}\\
    G \rar[rule]{\beta}           & H
  \end{tikzcd}~\eqref{dg:fmod3}
\end{diags}

\begin{proof}\label{prf:fmod}
  Using the universal property of the pushout~\eqref{dg:fmod2}, we
  obtain the mediating morphism $g_2$ and apply the Pasting Lemma for
  pushouts to conclude that the lower square in~\eqref{dg:fmod3} is a
  pushout.
\qed
\end{proof}

\subsection{Proof of \Lem{bmod} (backward modularity)}
\label{app:bmod}

Let $f$, $f_1$, $g_1$, $g_2$ be matches, and $\al$, $\ba$, $\gamma$
rules, such that the diagrams~\eqref{dg:bmod1} and \eqref{dg:bmod2}
are derivations.  Then there is a unique match $f_2$, such that
diagram~\eqref{dg:bmod3} commutes and is a vertical composition of
derivations.
\begin{diags}
  \begin{tikzcd}
    L \arrow{dd}[swap]{f}\rar[rule]{\al} & R \dar{g_1}\\
    & T \dar{g_2}\\
    G \rar[rule]{\beta}                     & H
  \end{tikzcd}~\eqref{dg:bmod1}\hspace{1em}
  \begin{tikzcd}
    L \dar[swap]{f_1} & R \dar{g_1}\lar[elur, swap]{\dg{\al}}\\
    S                 & T \lar[elur, swap]{\dg{\gamma}}
  \end{tikzcd}~\eqref{dg:bmod2}\hspace{1em}
  \begin{tikzcd}
    L \arrow[bend right=40]{dd}[swap]{f}\rar[rule]{\al}\dar{f_1} &
    R \dar{g_1}\\
    S \dar[dashed]{f_2}\rar[rule]{\gamma} & T \dar{g_2}\\
    G \rar[rule]{\beta}                   & H
  \end{tikzcd}~\eqref{dg:bmod3}
\end{diags}

\begin{proof}\label{prf:bmod}
The proof is in three steps: we first construct $f_1$ and $f_2$ in
$\Gs$, then we show that diagram~\eqref{dg:bmod3} is indeed a
composition of derivations, and finally we verify the uniqueness of
$f_2$ for this property.

Consider diagram~\eqref{dg:bmod4} below, which is the underlying
diagram in $\Gs$ of derivation~\eqref{dg:bmod1} from the lemma:
\begin{diags}
  \begin{tikzcd}
    L \arrow{dd}[swap]{f} & K \lar[swap]{\al_1}
    \arrow{dd}[swap]{h}\rar{\al_2} & R \dar{g_1}\\
    & & T \dar{g_2}\\
    G & D \lar[swap]{\beta_1}\rar{\beta_2} & H
  \end{tikzcd}~\eqnlabel{dg:bmod4}\hspace{2em}
  \begin{tikzcd}
    L \arrow[bend right=40]{dd}[swap]{f}\dar{f_1} &
    K \lar[swap]{\al_1}\rar{\al_2}\dar{h_1} & R \dar{g_1}\\
    S \dar{f_2} & E \lar[swap]{\gamma_1}\rar{\gamma_2}\dar{h_2} &
    T \dar{g_2}\\
    G & D \lar[swap]{\beta_1}\rar{\beta_2} & H
  \end{tikzcd}~\eqnlabel{dg:bmod5}
\end{diags}
The right-hand square is a pushout along monomorphisms, and hence it
is also a pullback in $\Gs$, and we can decompose it along $g_1$ and
$g_2$ to obtain the upper and lower right squares of
diagram~\eqref{dg:bmod5}.  By stability of pushouts in $\Gs$
(see~\cite[Lemma~4.7]{LackS05ita}), both these squares are also
pushouts.  To complete diagram~\eqref{dg:bmod5}, let its upper-left
square be a pushout, and $f_2$ the unique mediating morphism such that
the right-hand side of the diagram commutes.

Note that all morphisms in~\eqref{dg:bmod5}, except possibly $f_2$,
are monic.  The composites $\gamma_1 \comp h_1$ and $\gamma_2 \comp
h_1$ are pushout complements of $f_1 \comp \al_1$ and $g_1 \comp
\al_2$, respectively, and hence
by~\cite[proposition~12]{CorradiniHHK06icgt}, they are also FPBCs.  It
follow that the upper half of~\eqref{dg:bmod5} is indeed the
underlying diagram in $\Gs$ of both the derivations
in~\eqref{dg:bmod2} and the upper half of~\eqref{dg:bmod3}.  For the
lower half of~\eqref{dg:bmod5} to also be a derivation, $f_2$ must be
a match, so we need to show that it is monic.  To show this, let $S
\to[i_1] G' \gets[i_2] D$ be a pushout of $S \gets[\gamma_1] E
\to[h_2] D$, and let $u \colon G' \to G$ be its mediating morphism
with respect to the lower-left square of~\eqref{dg:bmod5}.  Since
$h_2$ is monic, so is $i_1$ (by adhesiveness of $\Gs$).  By pasting of
pushouts, $u$ is also the mediating morphism of the pushout $L \to[i_1
\comp f_1] G' \gets[i_2] D$ with respect to the left-hand square
in~\eqref{dg:bmod4}, which in turn, is also a pullback square.  In
fact, the composite pushout is the union of the subobjects represented
by $f$ and $\beta_1$, and hence by~\cite[Theorem~5.1]{LackS05ita},
$u$ is a monomorphism.  It then follows that the composite $f_2 = u
\comp i_1$ is also a monomorphism.
\begin{diags}
  \begin{tikzcd}
    L \arrow[bend right=40]{dd}[swap]{f}\dar{f_1} &
    K \lar[swap]{\al_1}\rar{\al_2}\dar{h_1} &
    R \dar[swap]{g_1}\arrow[bend left=40]{dd}{g}\\
    S \dar{f_2} &
    E \lar[swap]{\gamma_1}\rar{\gamma_2}\dar{h_2'} &
    T \dar[swap]{g_2'}\\
    G & D \lar[swap]{\beta_1}\rar{\beta_2} & H
  \end{tikzcd}~\eqnlabel{dg:bmod6}\hspace{.5em}
  \begin{tikzcd}
    L \arrow[bend right=40]{dd}[swap]{f}\dar{f_1} &
    K \lar[swap]{\al_1}\rar{\al_2}\dar{h_1} &
    R \dar[swap]{g_1}\arrow[bend left=40]{dd}{g}\\
    S \dar{f_2'} &
    E \lar[swap]{\gamma_1}\rar{\gamma_2}\dar{h_2''} &
    T \dar[swap]{g_2}\\
    G & D \lar[swap]{\beta_1}\rar{\beta_2} & H
  \end{tikzcd}~\eqnlabel{dg:bmod7}
\end{diags}
Now let $h_2'$ and $g_2'$ be matches such that
diagram~\eqref{dg:bmod6} commutes and is a composition of tiles as per
\Lem{fmod}.  Then we have $\beta_1 \comp h_2 = f_2 \comp
\gamma_1 = \beta_1 \comp h_2'$, and hence $h_2 = h_2'$ because
$\beta_1$ is monic.  Furthermore, the top-right square
of~\eqref{dg:bmod6} is a pushout, and hence $g_2'$ is the unique
mediating morphism such that $\beta_2 \comp h_2 = g_2' \comp \gamma_2$
and $g = g_2' \comp g_1$.  But from diagram~\eqref{dg:bmod5} we know
that $\beta_2 \comp h_2 = g_2 \comp \gamma_2$ and $g = g_2 \comp g_1$,
and hence $g_2' = g_2$.  It follows that the bottom half
of~\eqref{dg:bmod5} is indeed a derivation.

Finally, let $f_2'$ and $h_2''$ be any matches such that
diagram~\eqref{dg:bmod7} commutes and is a composition of tiles.  Then
$h_2'' = h_2$ (because
$\beta_2 \comp h_2'' = g_2 \comp \gamma_2 = \beta_2 \comp h_2$ and
$\beta_2$ is monic) and $f_2' = f_2$ (because it is the unique
mediating morphism of the top-left pushout-square such that
$\beta_1 \comp h_2 = f_2' \comp \gamma_1$ and $f = f_2' \comp f_1$),
which concludes the proof.
\qed
\end{proof}

\subsection{Proof of \Lem{derivability} (derivability)}
\label{app:derivability}

A match $g \colon R \to H$ is derivable by a rule $\al \colon L
\rto R$ if and only if $g \rwto[\al \comp \dg{\al}] g$.
Equivalently, $g$ is derivable from $f$ by $\al$ if and only if the
derivation $g \rwto[\dg{\al}] f$ is reversible.

\begin{proof}\label{prf:derivability}
  This is a direct consequence of \Lem{bmod}.  First, assume that
  $g\colon R \to H$ is derivable by $\al\colon L \rto R$ from some
  match $f\colon L \to G$, and let $h\colon L \to E$ be the comatch of
  some derivation $g \rwto[\dg{\al}] h$.  By \Lem{bmod} (setting $g_1
  = g$ and $f_1 = h$), the derivation $h \rwto[\al] g$ exists, and so
  does $g \rwto[\al \comp \dg{\al}] g$ (by horizontal composition of
  derivations).

  Now assume that we are given the derivation $g \rwto[\al \comp
  \dg{\al}] g$ instead, and let $f'\colon L \to G'$ and $h'\colon R
  \to E'$ be the comatches of some derivations $g \rwto[\dg{\al}] f'$
  and $f' \rwto[\al] h'$.  By horizontal composition and uniqueness of
  derivations up to isomorphism, we have $g \rwto[\al \comp \dg{\al}]
  h'$ and $g = u \comp h'$ for some (unique) isomorphism $u\colon E'
  \toby{\simeq} H$.  Hence there is a derivation $f' \rwto[\al] g$.
\qed
\end{proof}

\end{document}